\def\BibTeX{{\rm B\kern-.05em{\sc i\kern-.025em b}\kern-.08em T\kern-.1667em\lower.7ex\hbox{E}\kern-.125emX}}
\algnewcommand{\Initialize}[1]{%
  \State \textbf{Initialization:}
  \Statex \hspace*{\algorithmicindent}\parbox[t]{0.8\linewidth}{\raggedright #1}
}
\theoremstyle{definition}
\newtheorem{theorem}{Theorem}
\newtheorem{definition}{Definition}
\newtheorem{prop}{Proposition}
\newtheorem{remark}{Remark}
\newtheorem{eg}{Example}
\newcommand{\abs}[1]{\left\lvert#1\right\rvert}
\newcommand{\norm}[1]{\left\lVert#1\right\rVert}
\newcommand{\one}[1]{\mbox{1}\hspace{-0.25em}\mbox{l}_{\left\{#1\right\}}}
\newcommand{\argmax}{\operatornamewithlimits{argmax}}
\newcommand{\argmin}{\operatornamewithlimits{argmin}}
\newcommand{\relmiddle}[1]{\mathrel{}\middle#1\mathrel{}} 
\def\br{\mathbb R}
\def\vE{\mathbb E}
\font\b=cmr10 scaled\magstep4
\def\bigzerou{\smash{\lower1.7ex\hbox{\b 0}}}
\def\bigzerou{\smash{\lower1.7ex\hbox{\b 0}}}
\begin{document}

\title{A Variational Characterization of $H$-Mutual Information and its Application to Computing $H$-Capacity
\thanks{This work was supported by JSPS KAKENHI Grant Number JP23K16886.}
}

\author{
\IEEEauthorblockN{Akira Kamatsuka}
\IEEEauthorblockA{Shonan Institute of Technology \\ 
Email: \text{kamatsuka@info.shonan-it.ac.jp}
 }
\and
\IEEEauthorblockN{Koki Kazama}
\IEEEauthorblockA{Shonan Institute of Technology \\ 
Email: \text{kazama@info.shonan-it.ac.jp}
 }
\and
\IEEEauthorblockN{Takahiro Yoshida}
\IEEEauthorblockA{Nihon University \\ 
Email: \text{yoshida.takahiro@nihon-u.ac.jp}
 }
}

\maketitle

\begin{abstract}
$H$-mutual information ($H$-MI) is a wide class of information leakage measures, 
where $H=(\eta, F)$ is a pair of monotonically increasing function $\eta$ and a concave function $F$, which is a generalization of Shannon entropy.
$H$-MI is defined as the difference between the generalized entropy $H$ and its conditional version, 
including Shannon mutual information (MI), Arimoto MI of order $\alpha$, $g$-leakage, and expected value of sample information. 
This study presents a variational characterization of $H$-MI via statistical decision theory. 
Based on the characterization, we propose an alternating optimization algorithm for computing $H$-capacity.

\end{abstract}

\begin{IEEEkeywords}
$H$-mutual information, Arimoto--Blahut algorithm, statistical decision theory, value of information
\end{IEEEkeywords}

\section{Introduction}
Shannon mutual information (MI) $I(X; Y)$ \cite{shannon} is a typical quantity that quantifies 
the amount of information a random variable $Y$ contains about a random variable $X$. 
Several ways to generalize the Shannon MI are available in literature. 
A well-known generalization of Shannon MI is a class of $\alpha$-mutual information ($\alpha$-MI) $I_{\alpha}^{(\cdot)}(X; Y)$ \cite{7308959}, 
where $\alpha\in (0, 1)\cup (1, \infty)$ is a tunable parameter. 
The $\alpha$-MI class includes Sibson MI $I_{\alpha}^{\text{S}}(X; Y)$ \cite{Sibson1969InformationR}, 
Arimoto MI $I_{\alpha}^{\text{A}}(X; Y)$ \cite{arimoto1977}, and Csisz\'{a}r MI $I_{\alpha}^{\text{C}}(X; Y)$ \cite{370121}. 
These MIs share common properties such as non-negativity and data-processing inequality (DPI).

In problems on information security, Shannon MI can be interpreted as a measure of information leakage, 
i.e., a measure of how much information observed data $Y$ leak about secret data $X$.  
Recently, various operationally meaningful leakage measures were proposed for privacy-guaranteed data-publishing problems. 
For example, Calmon and Fawaz introduced the \textit{average cost gain} \cite{6483382} and Issa \textit{et al.} introduced the \textit{maximal leakage}.  
Extending the maximal leakage, Liao \textit{et al.} introduced \textit{$\alpha$-leakage} and \textit{maximal $\alpha$-leakage} \cite{8804205}. 
Alvim \textit{et al.} proposed \textit{$g$-leakage} \cite{6957119,ALVIM201932,6266165}, a rich class of information leakage measures; 
$g$-leakage was extended to \textit{maximal $g$-leakage} by Kurri \textit{et al.} \cite{10352344}.
Note that these information leakage measures are based on the adversary's decision-making on $X$ from the observed data $Y$ and a gain (utility) or loss (cost) function. 

Research on quantifying leaked information from the observed data $Y$ based on a decision-making problem can be traced back to the 1960s.
In a pioneering work by Raiffa and Schlaifer on quantifying the \textit{value of information} (VoI) \cite{raiffa1961applied}, 
the \textit{expected value of sample information} (EVSI) was formulated in a statistical decision-theoretic framework.
EVSI was defined as the largest increase in maximal Bayes expected gain (or the largest reduction of minimal Bayes risk) compared to those without using $Y$. 
Thus, information leakage measures in the information disclosure problem can be interpreted as variants of EVSI.

Recently, Am{\'e}rico \textit{et al.} proposed a wide class of information leakage measures, referred to as $H$-mutual information ($H$-MI) $I_{H}(X; Y)$ \cite{9505206,9064819}. 
Here, $H=(\eta, F)$ is a pair of a continuous real-valued function $F\colon \Delta_{\mathcal{X}}\to \br$ and a continuous and strictly increasing function $\eta\colon F(\Delta_{\mathcal{X}}) \to \br$, 
where $\Delta_{\mathcal{X}}$ is a probability simplex on a finite set $\mathcal{X}$ and $F(\Delta_{\mathcal{X}})$ is the image of $F$. 
When $\eta$ is an identity map and $F(p_{X}):=-\sum_{x}p_{X}(x)\log p_{X}(x)$, $H=(\eta, F)$ represents the Shannon entropy {$S(X)$}. Thus $H=(\eta, F)$ can be regarded as a generalized entropy.
$H$-MI is defined as the difference between the generalized entropy $H=(\eta, F)$ and its conditional version $H(X | Y)$, 
which includes Shannon MI, Arimoto MI of order $\alpha$, $g$-leakage, and EVSI. 
In \cite{9505206,9064819}, Am{\'e}rico \textit{et al.} provided the necessary and sufficient conditions (referred to as \textit{core-concavity} (CCV) condition) for $I_{H}(X; Y)$ to satisfy 
non-negativity and DPI when the conditional entropy $H(X|Y)$ satisfies the \textit{$\eta$-averaging} (EAVG) condition. 

In this study, we present a variational characterization of $H$-MI that satisfies DPI via statistical decision theory.
Our variational characterization transforms $H$-MI into the following optimization problem: 
\begin{align}
I_{H}(X; Y) &= \max_{q_{X\mid Y}} {\mathcal{F}_{H}}(p_{X}, q_{X\mid Y}), 
\end{align}
where $p_{X}\in \Delta_{\mathcal{X}}$ is a distribution on $X$ and $q_{X\mid Y}=\{q_{X\mid Y}(\cdot\mid y)\}_{y\in \mathcal{Y}}$ is a set of conditional distributions of $X$, given $Y=y$. 
This variational characterization allows us to derive an alternating optimization algorithm (also known as Arimoto--Blahut algorithm \cite{1054753}, \cite{1054855}) for computing $H$-capacity $C_{H}:=\max_{p_{X}}I_{H}(X; Y)$, 
such as the channel capacity $C:=\max_{p_{X}} I(X; Y)$ and Arimoto capacity $C_{\alpha}^{\text{A}}:=\max_{p_{X}}I_{\alpha}^{\text{A}}(X; Y)$
\footnote{It is worth mentioning that Liao \textit{et al.} reported the operational meaning of Arimoto capacity and Sibson capacity in the privacy-guaranteed data-publishing problems \cite[Thm 2]{8804205}; these capacities are essentially equivalent to the maximal $\alpha$-leakage.} \cite{arimoto1977,BN01990060en}, \cite{kamatsuka2024new}.  

\subsection{Main Contributions}
The main contributions of this study are as follows:

\begin{itemize}
\item We provide a variational characterization of $H$-MI  (Theorem \ref{thm:variational_characterization})
using the fact  that every concave function $F$ has a statistical decision-theoretic variational characterization \cite[Section 3.5.4]{grunwald2004}. 
\item On the basis of variational characterization, we build an alternating optimization algorithm for calculating $H$-capacity $C_{H}:=\max_{p_{X}}I_{H}(X; Y) = \max_{p_{X}}\max_{q_{X\mid Y}}{\mathcal{F}_{H}}(p_{X}, q_{X\mid Y})$ (Algorithm \ref{alg:ABA_H}) (see Section \ref{sec:application_ABA}). 
Moreover, we show that the algorithms for computing Arimoto capacity $C_{\alpha}^{\text{A}}$ derived from our approach coincide with the previous algorithms reported in \cite{BN01990060en}, \cite{kamatsuka2024new}. 
\end{itemize}

\subsection{Organization of the Paper}
The remainder of this paper is organized as follows. 
We review the statistical decision theory and $H$-MI in Section \ref{sec:preliminaries}. 
In Section \ref{sec:variational_characterization_H-MI}, we present the variational characterization of $H$-MI. 
In Section \ref{sec:application_ABA}, we derive an alternating optimization algorithm for computing $H$-capacity $C_{H}:=\max_{p_{X}}I_{H}(X; Y)$ based on the characterization.

\section{Preliminaries}\label{sec:preliminaries}

\subsection{Notations}
Let $X, Y$ be random variables on finite alphabets $\mathcal{X}$ and $\mathcal{Y}$, 
drawn according to a joint distribution $p_{X, Y} = p_{X}p_{Y\mid X}$. 
Let $p_{Y}$ be a marginal distribution of $Y$ and 
$p_{X\mid Y}(\cdot | y):=\frac{p_{X}(\cdot)p_{Y\mid X}(y\mid \cdot)}{\sum_{x}p_{X}(x)p_{Y\mid X}(y\mid x)}$  be a posterior distribution on $X$ given $Y=y$, respectively. 
The set of all distributions $p_{X}$ is denoted as $\Delta_{\mathcal{X}}$. We often identify $\Delta_{\mathcal{X}}$ with $(m-1)$-dimensional probability simplex 
$\left\{\, (p_{1},\dots, p_{m})\in [0, 1]^{m} \relmiddle{|} \sum_{i=1}^{m} p_{i}=1 \right\}$, where $m:=\abs{\mathcal{X}}$.
Given a function $f\colon \mathcal{X}\to \br$,  we use $\vE_{X}[f(X)]:=\sum_{x}f(x)p_{X}(x)$ and $\vE_{X}[f(X) | Y=y]:=\sum_{x}f(x)p_{X\mid Y}(x | y)$ to denote expectation on $f(X)$ and 
conditional expectation on $f(X)$ given $Y=y$, respectively.
We also use $\vE_{X}^{p_{X}}[f(X)]$ to emphasize that we are taking expectations $p_{X}$.
We use {$S(X)$, $S(X | Y)$, $I(X;Y):=S(X)-S(X | Y)$}\footnote{Note that, throughout this paper, the notations $H(X)$ and $H(X|Y)$ are used to denote generalized forms of entropy and conditional entropy introduced in Definitions \ref{def:generalized_entropy} and \ref{def:EAVG}.}, and $D(p || q )$ to denote Shannon entropy, conditional entropy, Shannon MI, and relative entropy, respectively. 
Let $\mathcal{A}$ be an action space (decision space) and $\delta\colon \mathcal{Y}\to \mathcal{A}$ be a decision rule for a decision maker (DM). 
Let $A := \delta(Y)$ be an action (decision) of the DM. We use $\ell(x, a)$ and $g(x, a)$ to denote the loss (cost) function and  gain (utility) function of the DM, respectively. 
Throughout this paper, we use $\log$ to denote the natural logarithm and $\norm{p_{X}}_{p}:=(\sum_{x}p_{X}(x)^{p})^{\frac{1}{p}}$ represents the $p$-norm of $p_{X}\in \Delta_{\mathcal{X}}$. 

We initially review statistical decision theory \cite{GVK027440176} and $H$-MI \cite{9505206,9064819}. 

\subsection{Statistical Decision Theory and Scoring Rules} \label{ssec:sdt}
In this subsection, we review statistical decision theory. 
In particular, we review a problem of deciding the optimal probability mass function (pmf) considering a loss or a gain function (referred to as a \textit{scoring rule}), 
which is historically known as a \textit{probability forecasting} problem.

Suppose that a DM makes action $A\in \mathcal{A}$ from observed data $Y\in \mathcal{Y}$ using a decision rule $\delta\colon \mathcal{Y}\to \mathcal{A}$ . 
We assume that the DM uses the decision rule $\delta^{*}$ that minimizes Bayes risk $r(\delta) := \vE_{X, Y}\left[\ell(X, \delta(Y))\right]$ (or maximizes Bayes expected gain $G(\delta):=\vE_{X, Y}\left[g(X, \delta(Y))\right]$). 
Figure \ref{fig:system_model} shows the system model for this problem.

\begin{figure}[htbp]
\centering
\resizebox{0.5\textwidth}{!}{
\begin{tikzpicture}[auto]
\tikzset{block/.style={draw, rectangle, minimum height = .5cm, minimum width = .5cm, text centered, thick}};
\node (database) {$X$}; 
\node[block,right=.5cm of database] (mechanism) {$p_{Y\mid X}$};
\node[right=.5cm of mechanism] (output) {$Y$};
\node[block, right=.5cm of output] (d_bob) {$\delta$};
\node[right=.5cm of d_bob] (T_hat) {$A$};

\draw[->, thick, >=latex] (database) --  (mechanism);
\draw[->, thick, >=latex] (mechanism) -- (output);
\draw[->, thick, >=latex] (output) -- (d_bob);
\draw[->, thick, >=latex] (d_bob) -- (T_hat);
\end{tikzpicture}
}
\caption{System model of the statistical decision theory}
\label{fig:system_model}
\end{figure}
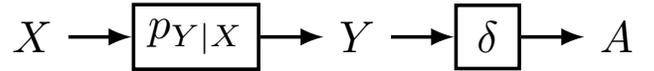

\begin{prop}[\text{\cite[Result 1]{GVK027440176}, \cite[Thm 2.7]{alma991009366059705251}}] \label{prop:optimal_decision_rule}
The minimal Bayes risk is given by 
\begin{align}
\min_{\delta}r(\delta) &= r(\delta^{\text{*}}) \\ 
&= \vE_{Y}\left[\min_{a\in \mathcal{A}} \vE_{X}\left[\ell(X, a)\relmiddle{|} Y \right]\right]  \label{eq:min_posterior_loss} \\ 
&= \sum_{y}p_{Y}(y) \left[\min_{a\in \mathcal{A}}\sum_{x}\ell(x,a) p_{X\mid Y}(x\mid y)\right], 
\end{align}
with the optimal decision rule $\delta^{\text{*}}\colon \mathcal{Y}\to \mathcal{A}$ given by 
\begin{align}
\delta^{\text{*}}(y) &:= \argmin_{a\in \mathcal{A}} \vE_{X}\left[\ell(X, a)\mid Y=y\right]. 
\end{align}
Similarly, the maximal Bayes expected gain and the optimal decision rule $\delta^{\text{*}}\colon \mathcal{Y}\to \mathcal{A}$ are given by 
\begin{align}
\max_{\delta} G(\delta) &= G(\delta^{*}) \\ 
&= \vE_{Y}\left[\max_{a\in \mathcal{A}} \vE_{X}\left[g(X, a)\relmiddle{|} Y\right]\right], \\ 
\delta^{\text{*}}(y) &:= \argmax_{a\in \mathcal{A}} \vE_{X}\left[g(X, a)\mid Y=y\right]. \label{eq:max_posterior_utility}
\end{align}
\end{prop}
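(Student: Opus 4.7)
The plan is to prove the result by first expanding Bayes risk using the tower property of conditional expectation and then exploiting the fact that a decision rule $\delta\colon \mathcal{Y}\to\mathcal{A}$ can be specified independently at each $y\in\mathcal{Y}$, so the global minimization over $\delta$ decomposes into pointwise minimizations over the action $a=\delta(y)$.

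Concretely, I would begin by writing
\begin{align}
r(\delta) &= \vE_{X,Y}\!\left[\ell(X,\delta(Y))\right] \nonumber \\
&= \sum_{y}p_{Y}(y)\sum_{x}\ell(x,\delta(y))\,p_{X\mid Y}(x\mid y),
\end{align}
using Bayes' rule $p_{X,Y}(x,y)=p_{Y}(y)p_{X\mid Y}(x\mid y)$. The key observation, which I would state explicitly, is that the outer sum has non-negative weights $p_{Y}(y)\geq 0$ and the inner bracket $\sum_{x}\ell(x,\delta(y))p_{X\mid Y}(x\mid y)$ depends on $\delta$ only through its value $\delta(y)\in\mathcal{A}$ at that particular $y$. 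Since distinct decision rules can realize any assignment $y\mapsto a(y)$, minimizing over $\delta$ is equivalent to choosing, independently for every $y$, an action $a\in\mathcal{A}$ minimizing $\sum_{x}\ell(x,a)p_{X\mid Y}(x\mid y)$.

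From this it follows immediately that
\begin{align}
\min_{\delta}r(\delta) &= \sum_{y}p_{Y}(y)\min_{a\in\mathcal{A}}\sum_{x}\ell(x,a)\,p_{X\mid Y}(x\mid y) \nonumber \\
&= \vE_{Y}\!\left[\min_{a\in\mathcal{A}}\vE_{X}\!\left[\ell(X,a)\relmiddle{|}Y\right]\right],
\end{align}
and the minimizing rule $\delta^{*}(y):=\argmin_{a\in\mathcal{A}}\vE_{X}[\ell(X,a)\mid Y=y]$ achieves the bound by construction (with any measurable tie-breaking selection when the argmin is not unique). The statement for the maximal Bayes expected gain follows by the identical argument with $\max$ in place of $\min$, or equivalently by setting $\ell(x,a):=-g(x,a)$ and applying the loss version.

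The only nontrivial point to address carefully is the well-posedness of the pointwise argmin: since $\mathcal{X}$ is finite, for each fixed $y$ the map $a\mapsto \vE_{X}[\ell(X,a)\mid Y=y]$ is a real-valued function on $\mathcal{A}$, so the minimum exists whenever $\mathcal{A}$ is finite or compact with a suitable continuity assumption on $\ell$. In the probability-forecasting setting used in this paper, where $\mathcal{A}$ is typically $\Delta_{\mathcal{X}}$ with $\ell$ a continuous scoring rule, this compactness-continuity hypothesis is the main---and essentially the only---obstacle, and it will be inherited from the standing assumptions on the scoring rules used later. With that in place, the remainder of the proof is the decomposition of expectation described above.
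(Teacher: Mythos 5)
Your proof is correct and follows the standard argument: the paper itself gives no proof of this proposition, citing Berger and Ghosh--Delampady--Samanta, and the proofs there proceed exactly by your decomposition $r(\delta)=\sum_{y}p_{Y}(y)\sum_{x}\ell(x,\delta(y))p_{X\mid Y}(x\mid y)$ followed by the observation that the non-negatively weighted inner terms can be minimized independently at each $y$. Your added care about well-posedness of the pointwise $\argmin$ (finiteness or compactness of $\mathcal{A}$ with continuity of $\ell$, and measurable tie-breaking) and the reduction of the gain case to the loss case via $\ell:=-g$ are both sound and consistent with how the result is used later in the paper.
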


\begin{remark} \label{rem:affine_same_optimal}
Let $\ell(x, a)$ be a loss function. Let us define a gain function $g(x, a):=c\ell(x, a) + d$, where $c<0$ and $d$ are constants. 
One can easily see that if $\delta^{*}$ minimize Bayes risk $r(\delta):=\vE_{X, Y}\left[\ell(X, \delta(Y))\right]$ 
then the rule $\delta^{*}$ also maximizes the Bayes expected gain $G(\delta):=\vE_{X, Y}\left[g(X, \delta(Y))\right]$. 
The reverse is also true.
\end{remark}

\begin{eg} \label{eg:0-1_loss}
{Let $\hat{X}$ be an estimator of $X$.} Suppose that a DM conducts a point estimation on $X$, i.e., $A=\hat{X}\in \mathcal{X}$ considering $0$-$1$ loss  $\ell_{\text{$0$-$1$}}(x, \hat{x})=\one{x=\hat{x}}, $ where $\one{\cdot}$ is {an} indicator function. 
Then the minimal Bayes risk and the optimal decision rule $\delta^{*}$ are given as follows:
\begin{align}
\min_{\delta}r(\delta) &= 1-\vE_{Y}\left[\max_{x} p_{X\mid Y}(x\mid Y)\right], \\ 
\delta^{*}(y) &= \argmax_{x} p_{X\mid Y}(x\mid y). \qquad \text{(MAP estimation)}
\end{align}
\end{eg}

\begin{eg} \label{eg:log_score}
Suppose that a DM decides the optimal pmf $q\in \mathcal{A}=\Delta_{\mathcal{X}}$ considering log-score $g_{\text{log}}(x, q):= \log q(x)$ \cite{https://doi.org/10.1111/j.2517-6161.1952.tb00104.x}.
Then, the maximal Bayes expected gain and the optimal decision rule are given as 
\begin{align}
\min_{\delta}r(\delta) &= {S(X\mid Y)}, \\ 
\delta^{*}(y) &= p_{X\mid Y}(\cdot\mid y), 
\end{align}
where  ${S(X | Y)}=-\sum_{y}p_{Y}(y)\sum_{x}p_{X\mid Y}(x | y)\log p_{X\mid Y}(x | y)$ is the conditional entropy.
\end{eg}

\begin{remark}
Historically, the problem of deciding the optimal pmf $q\in \Delta_{\mathcal{X}}$ considering a loss $\ell(x, q)$ or a gain $g(x, q)$ 
is called a \textit{probability forecasting} problem \cite{doi:10.1198/016214506000001437}, \cite{Dawid:2014ua}. 
In the problem, the loss or gain function is called the \textit{scoring rule}.
\end{remark}

\begin{remark} \label{rem:q_X_Y_risk}
Note that finding the optimal decision rule $\delta\colon \mathcal{Y}\to \Delta_{\mathcal{X}}$ that minimizes $r(\delta)$ (\textit{resp.} maximizes $G(\delta)$) is 
equivalent to finding the optimal set of conditional distributions $q_{X\mid Y}=\{q_{X\mid Y}(\cdot \mid y)\}_{y\in \mathcal{Y}}$ that 
minimizes $r(q_{X\mid Y}):=\vE_{X, Y}\left[\ell(X, q_{X\mid Y}(X \mid Y))\right]$ (\textit{resp.} maximizes $G(q_{X\mid Y}):=\vE_{X, Y}\left[g(X, q_{X\mid Y}(X \mid Y))\right]$). 
Thus we call $r(q_{X\mid Y})$ (\textit{resp.} $G(q_{X\mid Y})$) as  Bayes risk (\textit{resp.}  Bayes expected gain) for $q_{X\mid Y}$ and 
denote the optimal set of conditional distribution as $q_{X\mid Y}^{*}$. 
\end{remark}

\begin{eg} Besides the log-score $g_{\text{log}}(x, q)$ in Example \ref{eg:log_score}, there exist other scoring rules that give the same optimal set of conditional distribution $q_{X\mid Y}^{*}$.
Some examples are shown below: 
\begin{itemize}
\item $g_{\text{PS}}(x, q) := \frac{1}{\alpha-1} \left(\frac{q(x)}{\norm{q}_{\alpha}} \right)^{\alpha-1}$ (the \textit{pseudo-spherical score} \cite{Good_1971})
\item $g_{\text{Power}}(x, q) := \frac{\alpha}{\alpha-1}\cdot q(x)^{\alpha-1} - \norm{q}_{\alpha}^{\alpha}$ (the \textit{power score} \cite{Selten:1998aa} (also known as \textit{Tsallis score} \cite{Dawid:2014ua})) 
\end{itemize}
\footnote{The pseudo-spherical score and the power score are originally defined for $\alpha>1$. We multiply the original definitions by $\frac{1}{\alpha-1}$ so that we can define them for $\alpha\in (0, 1) \cup (1, \infty)$.}
\end{eg}

Note that the log-score $g_{\log}(x, q)$, pseudo-spherical score $g_{\text{PS}}(x, q)$, and power score $g_{\text{Power}}(x, q)$ 
are all \textit{proper scoring rules} (PSR) defined as follows.

\begin{definition} \label{def:PSR}
The scoring rule $g(x, q)$ is \textit{proper} if for all $q\in \Delta_{\mathcal{X}}$, 
\begin{align}
\vE_{X}^{p_{X}}\left[g(X, p_{X})\right] \geq \vE_{X}^{p_{X}}\left[g(X, q)\right]. \label{eq:PSR}
\end{align} 
If the equality holds if and only if $q=p_{X}$, then the scoring rule $g(x, q)$ is called \textit{strictly proper}\footnote{Similarly, we can define a (strictly) proper loss $\ell(x, q)$. }. 
\end{definition}

\begin{eg} 
Recently, Liao \textit{et al.} proposed \textit{$\alpha$-loss} $\ell_{\alpha}(x, q):=\frac{\alpha}{\alpha-1}\left( 1-q(x)^{\frac{\alpha-1}{\alpha}} \right)$ \cite[Def 3]{8804205} in the privacy-guaranteed data-publishing context. In \cite[Lemma 1]{8804205}, they proved that 
\begin{align}
\argmin_{q}\vE_{X}^{p_{X}}\left[\ell_{\alpha}(X, q)\right] = p_{X_{\alpha}}, 
\end{align}
where $p_{X_{\alpha}}$ is the \textit{$\alpha$-tilted distribution of $p_{X}$} (also known as \textit{scaled distribution \cite{7308959} and escort distribution \cite{10.5555/3019383}}) defined as follows:
\begin{align}
p_{X_{\alpha}}(x) &:= \frac{p_{X}(x)^{\alpha}}{\sum_{x} p_{X}(x)^{\alpha}}. \label{eq:alpha_tilted_dist}
\end{align}
Thus, $\alpha$-loss $\ell_{\alpha}(x, q)$ can be regard as a scoring rule that is \textit{not} proper. 
\end{eg}

Table \ref{tab:tab_scoring_rule} summarizes examples of scoring rules described above, their optimal values, and the optimal set of conditional distributions $q_{X\mid Y}^{*}$.

\setlength\tabcolsep{1.pt}
\begin{table*}[ht]
  \caption{Typical scoring rules for deciding $q\in \Delta_{\mathcal{X}}$ and the optimal decision rules}
  \label{tab:tab_scoring_rule}
  \centering
  \resizebox{1.0\textwidth}{!}{
  {
  \begin{tabular}{@{} ccccc @{}}
    \toprule
    \begin{tabular}{c}
    $\ell(x, q)$, \\ $g(x, q)$ 
    \end{tabular}
    & \begin{tabular}{c}
    $\argmin_{q}\vE_{X}\left[\ell(X, q)\right]$ \\ 
    $=\argmax_{q}\vE_{X}\left[g(X, q)\right]$
    \end{tabular} 
    & \begin{tabular}{c}
    $\min_{q}\vE_{X}\left[\ell(X, q)\right]$, \\ 
    $\max_{q}\vE_{X}\left[g(X, q)\right]$
    \end{tabular} 
    & \begin{tabular}{c}
    $\argmin_{q_{X\mid Y}}\vE_{X, Y}\left[\ell(X, q_{X\mid Y}(\cdot | Y))\right]$ \\ 
    $=\argmax_{q_{X\mid Y}}\vE_{X, Y}\left[g(X, q_{X\mid Y}(\cdot | Y))\right]$
    \end{tabular} 
    & \begin{tabular}{c}
    $\min_{q_{X\mid Y}}\vE_{X, Y}\left[\ell(X, q_{X\mid Y}(\cdot | Y))\right]$, \\ 
    $\max_{q_{X\mid Y}}\vE_{X, Y}\left[g(X, q_{X\mid Y}(\cdot | Y))\right]$ 
    \end{tabular} \\ 
    \midrule
    \begin{tabular}{c}
    $-\log q(x)$ (log-loss), \\ 
    $\log q(x)$ (log-score \cite{https://doi.org/10.1111/j.2517-6161.1952.tb00104.x}) 
    \end{tabular}
    & $p_{X}$ 
    & \begin{tabular}{c}
    $S(X)$, \\ 
    $-S(X)$
    \end{tabular} 
    & $p_{X\mid Y}(\cdot | y), y\in \mathcal{Y}$ 
    & \begin{tabular}{c}
    $S(X | Y)$, \\
    $-S(X | Y)$
    \end{tabular} \\ 
    \begin{tabular}{c}
    $\frac{1}{\alpha-1}\left(1- \left(  \frac{q(x)}{\norm{q}_{\alpha}} \right)^{\alpha-1} \right)$, \\
    $\frac{1}{\alpha-1}\cdot \left(\frac{q(x)}{\norm{q}_{\alpha}} \right)^{\alpha-1}$ \\ 
    (pseudo-spherical score \cite{Good_1971}) 
    \end{tabular}
    & $p_{X}$ 
    & \begin{tabular}{c}
    $\frac{1}{\alpha-1}\left( 1-\norm{p_{X}}_{\alpha} \right)$ \\ 
    (Harvda--Tsallis entropy), \\ 
    $\frac{1}{\alpha-1}\cdot \norm{p_{X}}_{\alpha}$
    \end{tabular} 
    & $p_{X\mid Y}(\cdot | y), y\in \mathcal{Y}$ 
    & \begin{tabular}{c}
    $\frac{1}{\alpha-1}\left( 1-\vE_{Y}\left[\norm{p_{X\mid Y}(\cdot | Y)}_{\alpha}\right] \right)$, \\ 
    $\frac{1}{\alpha-1}\cdot\vE_{Y}\left[\norm{p_{X\mid Y}(\cdot | Y)}_{\alpha}\right]$
    \end{tabular} \\ 
    \begin{tabular}{c}
    $\frac{\alpha}{\alpha-1}\left( 1- q(x)^{\alpha-1} \right) + \norm{q}_{\alpha}^{\alpha}$, \\
    $\frac{\alpha}{\alpha-1}\cdot q(x)^{\alpha-1} - \norm{q}_{\alpha}^{\alpha}$ \\ 
    \text{(power score \cite{Selten:1998aa}}, \\ 
    \text{Tsallis score \cite{Dawid:2014ua})}
    \end{tabular}
    & $p_{X}$ 
    & \begin{tabular}{c}
    $\frac{1}{\alpha-1}(1-\norm{p_{X}}_{\alpha}^{\alpha})$, \\ 
    $\frac{1}{\alpha-1}\cdot \norm{p_{X}}_{\alpha}^{\alpha}$
    \end{tabular} 
    & $p_{X\mid Y}(\cdot | y), y\in \mathcal{Y}$ 
    & \begin{tabular}{c}
    $\frac{1}{\alpha-1}\left( 1-\vE_{Y}\left[\norm{p_{X\mid Y}(\cdot | Y)}_{\alpha}^{\alpha}\right] \right)$, \\ 
    $\frac{1}{\alpha-1}\cdot\vE_{Y}\left[\norm{p_{X\mid Y}(\cdot | Y)}_{\alpha}^{\alpha}\right]$
    \end{tabular}  \\
    \begin{tabular}{c}
    $\frac{\alpha}{\alpha-1}\left( 1-q(x)^{\frac{\alpha-1}{\alpha}} \right)$ \text{($\alpha$-loss \cite{8804205})}, \\ 
    $\frac{\alpha}{\alpha-1}\cdot q(x)^{\frac{\alpha-1}{\alpha}}$ \text{($\alpha$-score)}
    \end{tabular}
    & $p_{X_{\alpha}}$ 
    & \begin{tabular}{c}
    $\frac{\alpha}{\alpha-1}\left(1-\norm{p_{X}}_{\alpha} \right)$, \\ 
    $\frac{\alpha}{\alpha-1}\cdot \norm{p_{X}}_{\alpha}$ 
    \end{tabular} 
    & $p_{X_{\alpha}\mid Y}(\cdot | y), y\in \mathcal{Y}$ 
    & \begin{tabular}{c}
    $\frac{\alpha}{\alpha-1} \left( 1-\vE_{Y}\left[\norm{p_{X\mid Y}(\cdot\mid Y)}_{\alpha}\right] \right)$, \\
    $\frac{\alpha}{\alpha-1} \cdot \vE_{Y}\left[\norm{p_{X\mid Y}(\cdot | Y)}_{\alpha}\right]$
    \end{tabular} \\ 
    \bottomrule
  \end{tabular}
  }
  }
\end{table*}

\subsection{$H$-Mutual information ($H$-MI) \cite{9505206,9064819}}
In this subsection, we review $H$-MI and show that $H$-MI includes well-known information leakage measures.

\begin{definition}[\text{\cite[Def. 11]{9505206}}] \label{def:generalized_entropy}
Let $p_{X}$ be a pmf of $X$, $F\colon \Delta_{\mathcal{X}}\to \br$ and $\eta\colon F(\Delta_{\mathcal{X}})\to \br$ be continuous functions,  
and $\eta$ be strictly increasing. 
Given $H=(\eta, F)$, the \textit{unconditional form of entropy} is defined as follows:
\begin{align}
H(X) := \eta(F(p_{X})).
\end{align}
\end{definition}

\begin{definition}[\text{CCV \cite[Def. 12]{9505206}}] \ 
$H=(\eta, F)$ is \textit{core-concave} (CCV) if $F$ is concave. 
We say that $H(X)$ is core-concave entropy if $H=(\eta, F)$ is CCV.
\end{definition}

\begin{definition}[\text{EAVG \cite[Def. 13]{9505206}}] \label{def:EAVG}
\footnote{We slightly modified the definition of \text{EAVG}.}
Given a joint distribution $p_{X, Y}=p_{X}p_{Y\mid X}$ and $H=(\eta, F)$, 
a functional $H(p_{X}, p_{Y\mid X})$ satisfies \textit{$\eta$-averaging} (EAVG) if it is represented as follows:
\begin{align}
H(p_{X}, p_{Y\mid X}) &= \eta \left( \vE_{Y}^{p_{Y}}\left[F(p_{X\mid Y}(\cdot \mid Y))\right] \right) \\ 
&= \eta \left( \sum_{y} p_{Y}(y)F(p_{X\mid Y}(\cdot \mid y)) \right) {, }
\end{align}
{where $p_{X\mid Y}(x|y):= \frac{p_{X}(x)p_{Y\mid X}(y | x)}{\sum_{x}p_{X}(x)p_{Y\mid X}(y | x)}$ is the posterior distribution of $X$ given $Y=y$ and $p_{Y}(y):=\sum_{x}p_{X}(x)p_{Y\mid X}(y|x)$ is the marginal distribution of $Y$. }
We say that $H(p_{X}, p_{Y\mid X})$ is conditional entropy of $H=(\eta, F)$ and it is denoted by $H(X | Y)$. 
\end{definition}

\begin{theorem}[\text{\cite[Thm. 2]{9064819} and \cite[Thm. 4]{9505206}}] \label{thm:iff_DPI}
Given $H=(\eta, F)$, $H$-MI is defined as 
\begin{align}
I_{H}(X; Y) &:= H(X) - H(X\mid Y), 
\end{align}
where $H(X | Y)$ satisfies EAVG. Then, the following are equivalent\footnote{Note that the original statement of the theorem is stated in terms of conditional entropy $H(X | Y)$ instead of $H$-MI $I_{H}(X; Y)$.}: 
\begin{description}
\item[(CCV)] $H=(\eta, F)$ is core-concave.
\item[(Non-negativity)] $I_{H}(X; Y)\geq 0$.
\item[(DPI)] If $X-Y-Z$ forms a Markov chain, then 
\begin{align}
I_{H}(X; Z) \leq I_{H}(X; Y). \label{eq:DPI}
\end{align}
\end{description}
\end{theorem}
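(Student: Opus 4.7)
The plan is to establish the three-way equivalence via the cyclic chain (CCV) $\Rightarrow$ (DPI) $\Rightarrow$ (Non-negativity) $\Rightarrow$ (CCV). Throughout I would lean on the fact that the outer function $\eta$ is strictly increasing, so $a\leq b$ if and only if $\eta(a)\leq \eta(b)$, which lets me move inequalities in and out of $\eta$ freely without requiring any further structure on $\eta$.

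For (CCV) $\Rightarrow$ (DPI), I would first exploit the Markov structure $X-Y-Z$ to write $p_{X\mid Z}(\cdot\mid z)=\sum_{y}p_{Y\mid Z}(y\mid z)\,p_{X\mid Y}(\cdot\mid y)$, i.e., each posterior given $Z=z$ is a convex combination of the posteriors given $Y=y$. Concavity of $F$ and Jensen's inequality then give $F(p_{X\mid Z}(\cdot\mid z))\geq \sum_{y}p_{Y\mid Z}(y\mid z)\,F(p_{X\mid Y}(\cdot\mid y))$. Averaging over $Z$ collapses the right-hand side into $\sum_{y}p_{Y}(y)\,F(p_{X\mid Y}(\cdot\mid y))$ by marginalization, and applying $\eta$ to both sides yields $H(X\mid Z)\geq H(X\mid Y)$. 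Subtracting from $H(X)$ delivers DPI.

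For (DPI) $\Rightarrow$ (Non-negativity), I would take $Z$ to be a constant random variable (independent of everything else). Then $X-Y-Z$ is trivially Markov, and the EAVG formula for $H(X\mid Z)$ collapses to $\eta(F(p_{X}))=H(X)$, so $I_{H}(X;Z)=0$. DPI then forces $I_{H}(X;Y)\geq I_{H}(X;Z)=0$.

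The most delicate direction is (Non-negativity) $\Rightarrow$ (CCV). For arbitrary $q_{1},q_{2}\in\Delta_{\mathcal{X}}$ and $\lambda\in[0,1]$, I would engineer a specific binary-$Y$ joint model with $\mathcal{Y}=\{1,2\}$, $p_{X}:=\lambda q_{1}+(1-\lambda)q_{2}$, and a channel $p_{Y\mid X}$ chosen so that $p_{Y}(1)=\lambda$, $p_{Y}(2)=1-\lambda$, with induced posteriors $p_{X\mid Y}(\cdot\mid 1)=q_{1}$ and $p_{X\mid Y}(\cdot\mid 2)=q_{2}$. Non-negativity of $I_{H}(X;Y)$ combined with strict monotonicity of $\eta$ then reduces exactly to $F(\lambda q_{1}+(1-\lambda)q_{2})\geq \lambda F(q_{1})+(1-\lambda)F(q_{2})$, which is the concavity of $F$. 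The main obstacle I expect is the bookkeeping in this last step: verifying that the desired pair of posteriors can always be realized by some legitimate $(p_{X},p_{Y\mid X})$ follows from a direct Bayes-rule computation, but it requires care when $q_{1}$ and $q_{2}$ have mismatched supports, and one must check that the $\eta$ on both sides of the non-negativity inequality cancels cleanly using only strict monotonicity (not, e.g., linearity).
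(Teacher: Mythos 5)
Your proof is correct. Note that the paper itself gives no proof of this theorem---it is imported verbatim from Am\'erico \emph{et al.} \cite{9064819,9505206}---and your cyclic argument (Jensen on the posterior mixture $p_{X\mid Z}(\cdot\mid z)=\sum_{y}p_{Y\mid Z}(y\mid z)p_{X\mid Y}(\cdot\mid y)$ for (CCV)$\Rightarrow$(DPI), a constant $Z$ for (DPI)$\Rightarrow$(Non-negativity), and the binary-$Y$ construction with prescribed posteriors $q_{1},q_{2}$ for (Non-negativity)$\Rightarrow$(CCV)) is essentially the standard proof given in those references; the support-mismatch worry you raise dissolves once you define the joint directly as $p_{X,Y}(x,y):=p_{Y}(y)q_{y}(x)$, and the only unstated detail is that $\lambda F(q_{1})+(1-\lambda)F(q_{2})$ lies in the domain $F(\Delta_{\mathcal{X}})$ of $\eta$, which holds because the continuous image of the simplex is an interval.
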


\begin{table*}[ht]
  \caption{Examples of $H$-mutual information ($H$-MI)}
  \label{tab:various_H-MI}
  \centering
  \resizebox{1.\textwidth}{!}{
  {
  \begin{tabular}{@{} ccccc @{}}
    \toprule
    Name of $H$-MI
    & $H(X)$
    & $\eta(t)$ 
    & $F(p_{X})$ 
    & $H(X | Y)$ 
    \\ 
    \midrule
    \begin{tabular}{c}
    Shannon MI \\ 
    $I(X; Y)$ \cite{shannon}  
    \end{tabular}
    & $-\sum_{x}p_{X}(x)\log p_{X}(x)$
    & $t$
    & $-\sum_{x}p_{X}(x)\log p_{X}(x)$
    & $-\sum_{y}p_{Y}(y)\sum_{x}p_{X\mid Y}(x | y)\log p_{X\mid Y}(x | y)$
    \\ 
    \begin{tabular}{c}
    Arimoto MI \\  
    $I_{\alpha}^{\text{A}}(X; Y)$ \cite{arimoto1977} 
    \end{tabular}
    & $\frac{\alpha}{1-\alpha}\log \norm{p_{X}}_{\alpha}$
    & $\begin{cases}
    \frac{\alpha}{1-\alpha} \log t, & 0<\alpha<1, \\
    \frac{\alpha}{1-\alpha} \log (-t), & \alpha>1
    \end{cases}$
    & $\begin{cases}
    \norm{p_{X}}_{\alpha}, & 0<\alpha<1, \\ 
    -\norm{p_{X}}_{\alpha}, & \alpha>1
    \end{cases}$
    & $\frac{\alpha}{1-\alpha}\log \sum_{y} p_{Y}(y)\sum_{x} \norm{p_{X\mid Y}(\cdot | y)}_{\alpha}$
    \\
    \begin{tabular}{c}
    Hayashi MI \\ 
     $I_{\alpha}^{\text{H}}(X; Y)$ 
    \end{tabular}
    & $\frac{1}{1-\alpha}\log \norm{p_{X}}_{\alpha}^{\alpha}$
    & $\begin{cases}
    \frac{1}{1-\alpha} \log t, & 0<\alpha<1, \\
    \frac{1}{1-\alpha} \log (-t), & \alpha>1
    \end{cases}$
    & $\begin{cases}
    \norm{p_{X}}_{\alpha}^{\alpha}, & 0<\alpha<1, \\ 
    -\norm{p_{X}}_{\alpha}^{\alpha}, & \alpha>1
    \end{cases}$
    & $\frac{1}{1-\alpha}\log \sum_{y} p_{Y}(y)\sum_{x} \norm{p_{X\mid Y}(\cdot | y)}_{\alpha}^{\alpha}$
    \\ 
    \begin{tabular}{c}
    Fehr--Berens MI \\
    $I_{\alpha}^{\text{FB}}(X; Y), \alpha>1$ 
    \end{tabular}
    & $-\log \norm{p_{X}}_{\alpha}^{\frac{\alpha}{\alpha-1}}$
    & $-\log (-t)$
    & $-\norm{p_{X}}_{\alpha}^{\frac{\alpha}{\alpha-1}}$
    & $-\log \sum_{y}p_{Y}(y) \norm{p_{X\mid Y}(\cdot |y)}_{\alpha}^{\frac{\alpha}{\alpha-1}}$
    \\ 
    \begin{tabular}{c}
    $\text{EVSI}^{(\cdot)}(X; Y)$ \\ 
    \cite{raiffa1961applied}, \cite{6483382}, \cite{ALVIM201932,6957119,6266165}
    \end{tabular}
    & \begin{tabular}{c}
    $\min_{q} \vE_{X}\left[\ell(X, q)\right]$, \\
    $-\max_{q} \vE_{X}\left[g(X, q)\right]$
    \end{tabular}
    & $t$
    & \begin{tabular}{c}
    $\min_{q} \vE_{X}\left[\ell(X, q)\right]$, \\
    $-\max_{q} \vE_{X}\left[g(X, q)\right]$
    \end{tabular}
    & \begin{tabular}{c}
    $\sum_{y}p_{Y}(y) \min_{q}\vE_{X}\left[\ell(X, q) \mid Y=y\right]$, \\
    $-\sum_{y}p_{Y}(y) \max_{q}\vE_{X}\left[g(X, q) \mid Y=y\right]$
    \end{tabular}
    \\
    \bottomrule
  \end{tabular}}
  }
\end{table*}


Table \ref{tab:various_H-MI} lists examples of $H$-MI, $H=(\eta, F)$, and $H(X | Y)$  described below that satisfy the conditions in Theorem \ref{thm:iff_DPI} (For more examples, see \cite{9505206,9064819}, \cite[Table I]{e24010039}). 

\begin{eg} \label{eg:H-MI}
Let $\alpha\in (0, 1)\cup (1, \infty)$. 
Shannon MI $I(X; Y):= {S(X) - S(X | Y)}$ and Arimoto MI $I_{\alpha}^{\text{A}}(X; Y):=H_{\alpha}(X)-H_{\alpha}^{\text{A}}(X\mid Y)$ are examples of $H$-MI, where
\begin{align}
H_{\alpha}(X) &:= \frac{\alpha}{1-\alpha}\log \norm{p_{X}}_{\alpha} = \frac{1}{1-\alpha}\log \norm{p_{X}}_{\alpha}^{\alpha} \\ 
&= -\log \norm{p_{X}}_{\alpha}^{\frac{\alpha}{\alpha-1}}, \\
H_{\alpha}^{\text{A}}(X\mid Y)&:= \frac{\alpha}{1-\alpha}\log \sum_{y} p_{Y}(y)\sum_{x} \norm{p_{X\mid Y}(\cdot \mid y)}_{\alpha}
\end{align}
are the R\`{e}nyi entropy of order $\alpha$ and the Arimoto conditional entropy of order $\alpha$ \cite{arimoto1977}, respectively.
\end{eg}

As shown in Example \ref{eg:H-MI}, the R\`{e}nyi entropy $H_{\alpha}(X)$ can be represented in at least three different ways. 
The corresponding $H=(\eta, F)$ for these expressions are shown in Table \ref{tab:various_H-MI}.
Thus, we can define novel MIs as follows:

\begin{definition}[Hayashi MI, Fehr--Berens MI]
Hayashi MI of order $\alpha\in (0, 1)\cup (1, \infty)$ and Fehr--Berens MI of order $\alpha>1$ are defined as follows:
\begin{align}
I_{\alpha}^{\text{H}}(X; Y) &:= H_{\alpha}(X) - H_{\alpha}^{\text{H}}(X \mid  Y), \\ 
I_{\alpha}^{\text{FB}}(X; Y) &:= H_{\alpha}(X) - H_{\alpha}^{\text{FB}}(X \mid  Y), 
\end{align}
where 
\begin{align}
H_{\alpha}^{\text{H}}(X; Y) &:= \frac{1}{1-\alpha}\log \sum_{y} p_{Y}(y)\sum_{x} \norm{p_{X\mid Y}(\cdot | y)}_{\alpha}^{\alpha}, \\
H_{\alpha}^{\text{FB}}(X; Y) &:= -\log \sum_{y}p_{Y}(y) \norm{p_{X\mid Y}(\cdot |y)}_{\alpha}^{\frac{\alpha}{\alpha-1}} 
\end{align}
are the Hayashi conditional entropy of order $\alpha$ \cite[Section II.A]{5773033} and the Fehr--Berens conditional entropy of order $\alpha$ \cite[Section III.E, 5)]{6898022}, respectively. 
\end{definition}

{Since $H_{\alpha}^{\text{A}}(X | Y)\geq H_{\alpha}^{\text{H}}(X | Y)$ \cite[Prop 1]{10.1007/978-3-319-04268-8_7}, it follows that Hayashi MI is greater than or equal to Arimoto MI. 
\begin{prop} \label{prop:Arimoto_less_than_Hayashi} 
Let $\alpha\in (0, 1)\cup (1, \infty)$.
\begin{align}
I_{\alpha}^{\text{A}}(X; Y) \leq I_{\alpha}^{\text{H}}(X; Y). 
\end{align}
\end{prop}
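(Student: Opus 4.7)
The plan is to notice that both $I_{\alpha}^{\text{A}}(X;Y)$ and $I_{\alpha}^{\text{H}}(X;Y)$ share the same unconditional term $H_{\alpha}(X)$ (as recorded in Table~\ref{tab:various_H-MI} and Example~\ref{eg:H-MI}), so the claim
\begin{align*}
I_{\alpha}^{\text{A}}(X; Y) \leq I_{\alpha}^{\text{H}}(X; Y)
\end{align*}
is equivalent to $H_{\alpha}^{\text{H}}(X\mid Y) \leq H_{\alpha}^{\text{A}}(X\mid Y)$. I would first verify this reduction explicitly from the definitions, and then derive the conditional-entropy inequality (which is exactly the cited \cite[Prop.~1]{10.1007/978-3-319-04268-8_7}) from Jensen's inequality, so that the paper is self-contained.

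To carry out the Jensen step, I would rewrite the Arimoto conditional entropy using the identity $\frac{\alpha}{1-\alpha}\log t = \frac{1}{1-\alpha}\log t^{\alpha}$ so that both $H_{\alpha}^{\text{A}}(X\mid Y)$ and $H_{\alpha}^{\text{H}}(X\mid Y)$ sit under the common prefactor $\frac{1}{1-\alpha}$:
\begin{align*}
H_{\alpha}^{\text{A}}(X\mid Y) &= \frac{1}{1-\alpha}\log \bigl(\vE_{Y}[Z]\bigr)^{\alpha}, \\
H_{\alpha}^{\text{H}}(X\mid Y) &= \frac{1}{1-\alpha}\log \vE_{Y}[Z^{\alpha}],
\end{align*}
where $Z := \norm{p_{X\mid Y}(\cdot\mid Y)}_{\alpha}$. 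Thus the desired inequality reduces to comparing $(\vE_{Y}[Z])^{\alpha}$ and $\vE_{Y}[Z^{\alpha}]$ via the map $t\mapsto t^{\alpha}$.

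Next I would split into the two regimes. For $0<\alpha<1$, the map $t\mapsto t^{\alpha}$ is concave, so Jensen's inequality gives $(\vE_{Y}[Z])^{\alpha} \geq \vE_{Y}[Z^{\alpha}]$; combined with the positive prefactor $\frac{1}{1-\alpha}>0$, this yields $H_{\alpha}^{\text{A}}(X\mid Y)\geq H_{\alpha}^{\text{H}}(X\mid Y)$. For $\alpha>1$, the same map is convex, reversing Jensen to $(\vE_{Y}[Z])^{\alpha} \leq \vE_{Y}[Z^{\alpha}]$, but now the prefactor $\frac{1}{1-\alpha}<0$ is negative, so the logarithmic comparison flips again and the same conditional-entropy inequality emerges. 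Substituting back into $I_{\alpha}^{\text{A}}(X;Y) = H_{\alpha}(X) - H_{\alpha}^{\text{A}}(X\mid Y)$ and $I_{\alpha}^{\text{H}}(X;Y) = H_{\alpha}(X) - H_{\alpha}^{\text{H}}(X\mid Y)$ completes the proof.

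No substantial obstacle arises; the only delicate point is the sign bookkeeping of the prefactor $\frac{1}{1-\alpha}$ across the two regimes $\alpha\in(0,1)$ versus $\alpha>1$, which must be tracked together with the convexity/concavity of $t\mapsto t^{\alpha}$ to ensure that both cases produce the same direction of inequality. The entire argument is essentially a single application of Jensen's inequality once the two conditional entropies are written in comparable form.
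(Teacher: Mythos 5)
Your proof is correct and follows the paper's argument exactly: both mutual informations share the unconditional term $H_{\alpha}(X)$, so the claim reduces to $H_{\alpha}^{\text{A}}(X\mid Y)\geq H_{\alpha}^{\text{H}}(X\mid Y)$, which is precisely the inequality the paper invokes from \cite[Prop.~1]{10.1007/978-3-319-04268-8_7}. The only difference is that you additionally derive that cited inequality from Jensen's inequality applied to $t\mapsto t^{\alpha}$ (with the sign of the prefactor $\tfrac{1}{1-\alpha}$ correctly compensating for the switch between concavity and convexity across $\alpha\in(0,1)$ and $\alpha>1$), which makes the argument self-contained; the bookkeeping in both regimes checks out.
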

}


The amount of information that the observed data $Y$ contain about $X$ can also be quantified using the framework of a decision-making problem. 
In the 1960s, the EVSI was proposed by Raiffa and Schaifer \cite{raiffa1961applied}. 
Recently, equivalents or variants of the EVSI have been proposed in the context of privacy-guaranteed data-publishing problems.
For example, Calmon and Fawaz proposed average (cost) gain \cite{6483382} and Alvim \textit{et al.} proposed $g$-leakage \cite{6957119,ALVIM201932,6266165}.

\begin{definition} 
Let $g(x, a)$ be a gain function. The EVSI \cite{raiffa1961applied}, also known as \textit{average gain} \cite{6483382} and \textit{additive $g$-leakage} \cite{6957119,ALVIM201932,6266165}, 
is defined as the largest increase in the maximal Bayes expected gain compared to those without using $Y$, i.e., 
\begin{align}
&\text{EVSI}^{g}(X; Y) := \max_{\delta} G(\delta) - \max_{a} \vE_{X}\left[g(X, a)\right] \\ 
&= - \max_{a} \vE_{X}\left[g(X, a)\right] - \vE_{Y} \left[-\max_{a} \vE_{X}\left[g(X, a)\relmiddle{|} Y \right] \right], \label{eq:evsi_equality}
\end{align}
where the equality in \eqref{eq:evsi_equality} follows from Proposition \ref{prop:optimal_decision_rule}. 
The EVSI can also be defined using a loss function $\ell(x, a)$ as the largest reduction of the minimal Bayes risk compared with those without using $Y$, i.e., 
\begin{align}
&\text{EVSI}^{\ell}(X; Y) := \min_{a} \vE_{X}\left[\ell(X, a)\right] - \min_{\delta} r(\delta)\\ 
&= \max_{a} \vE_{X}\left[\ell(X, a)\right] - \vE_{Y} \left[\min_{a} \vE_{X}\left[\ell(X, a)\relmiddle{|} Y \right] \right]. 
\end{align}
\end{definition}

\begin{eg}
Suppose that a DM decides a pmf $q\in \Delta_{\mathcal{X}}$ considering log-loss $\ell_{\text{log}}(x, q):= -\log q(x)$ or log-score $g_{\text{log}}(x, q):=\log q(x)$. 
From Example \ref{eg:log_score}, we obtain 
\begin{align}
\text{EVSI}^{\ell_{\text{log}}}(X; Y) &= \text{EVSI}^{g_{\text{log}}}(X; Y) =  I(X; Y).
\end{align}
\end{eg}

Instead of examining the differences between $G(\delta)$ and $\vE_{X}[g(X, a)]$, one can quantify information leakage by examining their ratio.
Alvim \textit{et al.} proposed \textit{multiplicative} $g$-leakage \cite{6957119,ALVIM201932,6266165} as follows:
\begin{definition}[multiplicative $g$-leakage]
\footnote{We slightly modified the definition of the multiplicative $g$-leakage so that we can define it using non-positive gain function $g(x,a)$ by multiplying $c(g)$.}
Let $g(x, a)$ be a non-negative or non-positive gain function and $c(g)$ be a function of $g$ such that its sign is equal to $\text{sign}(g)$\footnote{$\text{sign}(g):=1$, if $g(x, a)\geq 0, \forall (x, a)$, $-1$; otherwise. }. 
Then the \textit{multiplicative $g$-leakage} is defined as the largest multiplicative increase of the maximal Bayes expected gain compared to those of without $Y$, i.e., 
\begin{align}
&\text{MEVSI}^{g}(X; Y) := c(g) \log \frac{\max_{\delta} G(\delta)}{\max_{a} \vE_{X}\left[g(X, a)\right]} \\ 
&= c(g) \log \frac{\vE_{Y} \left[\max_{a} \vE_{X}\left[g(X, a)\relmiddle{|} Y \right] \right]}{\max_{a} \vE_{X}\left[g(X, a)\right]}.
\end{align}
Similarly, we can define $\text{MEVSI}^{\ell}(X; Y)$ using a loss function $\ell(x, a)$. 
\end{definition}

\begin{eg} \label{eg:Arimoto_MI_expression}
Suppose that a DM decides a pmf $q\in \Delta_{\mathcal{X}}$ considering 
pseudo-spherical score $g_{\text{PS}}(x, q):= \frac{1}{\alpha-1}\cdot \left(\frac{q(x)}{\norm{q}_{\alpha}} \right)^{\alpha-1}$ 
or $g_{\alpha}(x, q):=\frac{\alpha}{\alpha-1}\cdot q(x)^{\frac{\alpha-1}{\alpha}}$ (referred to as $\alpha$-score). 
Define $c(g_{\text{PS}}) = c(g_{\alpha}) := \frac{\alpha}{\alpha-1}$. From Table \ref{tab:tab_scoring_rule}, we obtain 
\begin{align}
\text{MEVSI}^{g_{\text{PS}}}(X; Y) &= \text{MEVSI}^{g_{\alpha}}(X; Y)  \notag \\
&= I_{\alpha}^{\text{A}}(X; Y). \qquad \text{(Arimoto MI)}
\end{align}
\end{eg}

\begin{eg}
Suppose that a DM decides a pmf $q\in \Delta_{\mathcal{X}}$ considering 
a power score $g_{\text{Power}}(x, q):=\frac{\alpha}{\alpha-1}\cdot q(x)^{\alpha-1} - \norm{q}_{\alpha}^{\alpha}$.
Define $c(g_{\text{Power}}) := \frac{1}{\alpha-1}$. From Table \ref{tab:tab_scoring_rule}, we obtain 
\begin{align}
\text{MEVSI}^{g_{\text{Power}}}(X; Y) &= I_{\alpha}^{\text{H}}(X; Y). \qquad \text{(Hayashi MI)} 
\end{align}
\end{eg}

Note that we can easily show that $F(p_{X}) := -\vE_{X}^{p_{X}}[g(X, a)]$ and $F(p_{X}):=\vE_{X}^{p_{X}}[\ell(X, a)]$ are concave with respect to $p_{X}$ 
and that $H(X | Y):= \vE_{Y} \left[-\max_{a} \vE_{X}\left[g(X, a)\relmiddle{|} Y \right]\right]$ and $H(X | Y):=$\\$\vE_{Y} \left[\min_{a} \vE_{X}\left[\ell(X, a)\relmiddle{|} Y \right] \right]$ satisfy the EAVG condition given in Definition \ref{def:EAVG} (see also \cite[Sec V.F]{9064819}).
Thus, we obtain the following result.
\begin{prop}[\text{\cite[Sec V.F]{9064819}}]
$\text{EVSI}^{(\cdot)}(X; Y)$ and $\text{MEVSI}^{(\cdot)}$ are members of $H$-MI. 
\end{prop}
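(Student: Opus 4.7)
The plan is to exhibit explicit pairs $H=(\eta,F)$ satisfying the hypotheses of Theorem \ref{thm:iff_DPI} under which the unconditional entropy $\eta(F(p_{X}))$ and the EAVG conditional entropy reproduce $\text{EVSI}^{(\cdot)}$ and $\text{MEVSI}^{(\cdot)}$. I will treat the gain-function versions; the loss versions follow by the same argument using $F(p_{X}):=\min_{a}\vE_{X}^{p_{X}}[\ell(X,a)]$, which is concave as a pointwise infimum of affine functionals in $p_{X}$.

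For $\text{EVSI}^{g}$, I would take
\begin{align}
F(p_{X}):=-\max_{a\in\mathcal{A}}\vE_{X}^{p_{X}}[g(X,a)], \qquad \eta(t):=t.
\end{align}
For each fixed $a$ the expectation is linear in $p_{X}$, so $p_{X}\mapsto \max_{a}\vE_{X}^{p_{X}}[g(X,a)]$ is a pointwise supremum of affine maps and hence convex; its negation $F$ is therefore concave. With $\eta$ the identity, the EAVG definition (Definition \ref{def:EAVG}) yields
\begin{align}
H(X\mid Y)=\vE_{Y}\bigl[F(p_{X\mid Y}(\cdot\mid Y))\bigr]=-\vE_{Y}\Bigl[\max_{a}\vE_{X}\bigl[g(X,a)\,\big|\,Y\bigr]\Bigr],
\end{align}
so that $H(X)-H(X\mid Y)$ reproduces the right-hand side of \eqref{eq:evsi_equality}, which equals $\text{EVSI}^{g}(X;Y)$ by Proposition \ref{prop:optimal_decision_rule}.

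For $\text{MEVSI}^{g}$, I keep the same $F$ but compose it with a logarithmic $\eta$ whose branch is matched to $\mathrm{sign}(g)$. If $g\geq 0$, then $F(p_{X})\leq 0$ and I set $\eta(t):=-c(g)\log(-t)$; if $g\leq 0$, then $F(p_{X})\geq 0$ and I set $\eta(t):=-c(g)\log(t)$. In either case $\eta$ is continuous, and a differentiation shows $\eta'(t)=-c(g)/t$ on the relevant sub-interval, which is strictly positive by the assumption that $\mathrm{sign}(c(g))=\mathrm{sign}(g)$. A direct computation then gives $\eta(F(p_{X}))=-c(g)\log\bigl|\max_{a}\vE_{X}[g(X,a)]\bigr|$ and, via EAVG, $H(X\mid Y)=-c(g)\log\bigl|\vE_{Y}[\max_{a}\vE_{X}[g(X,a)\mid Y]]\bigr|$. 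Since numerator and denominator have the same sign, the ratio inside the logarithm is positive and the difference $H(X)-H(X\mid Y)$ collapses to $c(g)\log\frac{\vE_{Y}[\max_{a}\vE_{X}[g(X,a)\mid Y]]}{\max_{a}\vE_{X}[g(X,a)]}=\text{MEVSI}^{g}(X;Y)$.

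The only real subtlety is the sign bookkeeping in the multiplicative case: one must ensure that $\eta$ is defined and strictly increasing on the entire image $F(\Delta_{\mathcal{X}})$, which forces the split according to whether $g$ is non-negative or non-positive and uses the sign-matching condition on $c(g)$. Assuming $\max_{a}\vE_{X}[g(X,a)]\neq 0$ (which is already needed for MEVSI to be well defined), the two branches cover both cases. Concavity of $F$ and the EAVG form of $H(X\mid Y)$ transfer verbatim from the additive argument, so Theorem \ref{thm:iff_DPI} applies without further work and the claim follows.
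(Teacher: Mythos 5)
Your proposal is correct and follows essentially the same route as the paper: take $F(p_{X})$ to be the negative of the maximal expected gain (resp.\ the minimal expected loss), which is concave as an infimum of affine functionals, pair it with the identity $\eta$ for $\text{EVSI}^{(\cdot)}$ and a logarithmic $\eta$ for $\text{MEVSI}^{(\cdot)}$, and check the EAVG form of $H(X\mid Y)$. The only difference is that you spell out the sign-split choice of logarithmic branch $\eta(t)=-c(g)\log(\mp t)$ for the multiplicative case, a detail the paper leaves implicit in Table~\ref{tab:various_H-MI} and the cited reference; your bookkeeping there is accurate.
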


Conversely, can we represent $H$-MI $I_{H}(X; Y)$ by a decision-theoretic quantity?
In the next section, we will show that this is possible. Furthermore, we derive a variational characterization of $H$-MI using this representation.

\section{Variational Characterization of $H$-MI}\label{sec:variational_characterization_H-MI}
In this section, we provide a variational characterization of $H$-MI $I_{H}(X; Y)$ using the fact that every continuous concave function $F$ has a statistical decision-theoretic variational characterization \cite[Section 3.5.4]{grunwald2004}. 

Gr{\"u}nwald and Dawid showed that every concave function $F\colon \Delta_{\mathcal{X}}\to \br$ has the following variational characterization. 

\begin{prop} [\text{\cite[Section 3.5.4]{grunwald2004}}] \label{prop:variational_characterization_F}
Let $\mathcal{X}=\{x_{1},x_{2}, \dots, x_{m}\}$ and $F\colon \Delta_{\mathcal{X}}\to \br$ be a continuous concave functions. 
Suppose that a DM decide a pmf $q\in \Delta_{\mathcal{X}}\subseteq [0, 1]^{m}$ considering 
the following proper loss function $\ell_{F}(x, q)$ defined as 
\begin{align}
\ell_{F}(x, q) := F(q) + z^{\top} (\mbox{1}\hspace{-0.25em}\mbox{l}^{x}-q),  \label{eq:l_F}
\end{align}
where 
\begin{itemize}
\item $\mbox{1}\hspace{-0.25em}\mbox{l}^{x}$ is the $m$-dimensional vector having $\mbox{1}\hspace{-0.25em}\mbox{l}^{x}_{j}=1$ if $j=x$, $0$; otherwise, 
\item $z\in \partial F(q)\subseteq \br^{m}$ is a subgradient in subdifferential of $F(q)$\footnote{Note that if $F$ is {differentiable}, then the subdifferential $\partial F(q)$ is singleton, i.e., $\partial F(q)=\{\nabla F(q)\}$, where $\nabla F(q)$ is the gradient of $F(q)$. }.
\end{itemize}
Then, the following holds:
\begin{align}
F(p_{X}) &= \min_{q} \vE_{X}^{p_{X}}\left[\ell_{F}(X, q)\right], 
\end{align}
where the minimum is achieved at $q=p_{X}$.
\end{prop}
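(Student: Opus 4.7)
The plan is to reduce the claim to the standard supergradient inequality for concave functions, which directly certifies both the optimal value $F(p_{X})$ and the optimality of $q=p_{X}$.

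First I would evaluate the expected loss in closed form. By linearity of expectation and the identity $\sum_{x}p_{X}(x)\mbox{1}\hspace{-0.25em}\mbox{l}^{x}=p_{X}$ (identifying $p_{X}$ with its vector representation on $\Delta_{\mathcal{X}}\subseteq[0,1]^{m}$), the perturbation $z^{\top}(\mbox{1}\hspace{-0.25em}\mbox{l}^{x}-q)$ averages to $z^{\top}(p_{X}-q)$, giving
\begin{equation*}
\vE_{X}^{p_{X}}[\ell_{F}(X, q)] = F(q) + z^{\top}(p_{X}-q).
\end{equation*}
At $q=p_{X}$ the inner product vanishes and the expression equals $F(p_{X})$, so this candidate value is indeed attained. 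To show that it is also the minimum, I need to establish
\begin{equation*}
F(q) + z^{\top}(p_{X}-q) \geq F(p_{X}) \quad \text{for all } q\in \Delta_{\mathcal{X}} \text{ and all } z\in \partial F(q),
\end{equation*}
which is precisely the supergradient inequality $F(p)\leq F(q)+z^{\top}(p-q)$ for the concave function $F$ (what the proposition refers to as the subgradient, with $\partial F(q)$ playing the role of the superdifferential), applied with $p=p_{X}$ and rearranged. This delivers both the equality and the optimality claim simultaneously.

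The only mildly delicate point is the existence of a valid $z\in\partial F(q)$ at an arbitrary $q\in \Delta_{\mathcal{X}}$, including boundary points of the simplex at which $F$ may fail to be differentiable. This is guaranteed by standard convex analysis: a continuous concave function on a convex set admits a non-empty superdifferential throughout the relative interior, and the argument extends to the boundary either by approximation or by restricting $F$ to the affine hull of $\Delta_{\mathcal{X}}$. I do not anticipate any deeper obstacle, since the core of the proof is a single invocation of the supergradient inequality once the expectation has been computed.
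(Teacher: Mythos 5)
Your proof is correct: the paper itself gives no proof of this proposition (it is imported from Gr\"unwald--Dawid), and your argument---computing $\vE_{X}^{p_{X}}[\ell_{F}(X,q)]=F(q)+z^{\top}(p_{X}-q)$ and invoking the supergradient inequality $F(p_{X})\leq F(q)+z^{\top}(p_{X}-q)$---is exactly the standard derivation underlying the cited result. Your side remark about existence of a supergradient at boundary points of $\Delta_{\mathcal{X}}$ is the right caveat to flag (the paper glosses over it too), and your proposed fixes are adequate.
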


\begin{eg} \label{eg:examples_l_F}
Some examples of the proper loss function $\ell_{F}(x, q)$ in Proposition \ref{prop:variational_characterization_F} are listed below: 
\begin{itemize}
\item If $F(p_{X}) = -\sum_{x}p_{X}(x)\log p_{X}(x)$, then $\ell_{F}(x, q) = \ell_{\text{log}}(x, q) = -g_{\text{log}}(x, q) = -\log q(x)$. 
\item If $F(p_{X}) = \norm{p_{X}}_{\alpha}, 0<\alpha<1$, then $\ell_{F}(x, q) = \left(\frac{q(x)}{\norm{q}_{\alpha}} \right)^{\alpha-1}=(\alpha-1)g_{\text{PS}}(x, q)$. 
If $F(p_{X}) = -\norm{p_{X}}_{\alpha}, \alpha>1$, then $\ell_{F}(x, q) = (1-\alpha)g_{\text{PS}}(x, q)$. 
\item If $F(p_{X}) = \norm{p_{X}}_{\alpha}^{\alpha}, 0<\alpha<1$, then $\ell_{F}(x, q) = \alpha q(x)^{\alpha-1} - (\alpha-1)\norm{q}_{\alpha}^{\alpha} = (\alpha-1) g_{\text{Power}}(x, q)$. 
If $F(p_{X}) = -\norm{p_{X}}_{\alpha}^{\alpha}, \alpha>1$, then $\ell_{F}(x, q) = (1-\alpha) g_{\text{Power}}(x, q)$.
\item If $F(p_{X}) = -\norm{p_{X}}_{\alpha}^{\frac{\alpha}{\alpha-1}}, \alpha>1$, then $\ell_{F}(x, q) = \norm{q}_{\alpha}^{\alpha-1} - \frac{\alpha}{\alpha-1}(\norm{q}_{\alpha}^{\alpha} - q(x)^{\alpha-1})$. 
\end{itemize}
\end{eg}

Using Proposition \ref{prop:variational_characterization_F}, we obtain the following variational characterization of $H$-MI. 

\begin{theorem}[Variational characterization of $H$-MI] \label{thm:variational_characterization}
Suppose that $H=(\eta, F)$ satisfies the CCV condition and $H(X\mid Y)$ satisfies the EAVG condition, respectively. 
Then, there exists a functional ${\mathcal{F}_{H}}(p_{X}, q_{X\mid Y})$ such that 
\begin{align}
I_{H}(X; Y) &= \max_{q_{X\mid Y}}{\mathcal{F}_{H}}(p_{X}, q_{X\mid Y}).
\end{align}
\end{theorem}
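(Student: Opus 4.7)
The plan is to combine the decision-theoretic characterization of a single concave function (Proposition \ref{prop:variational_characterization_F}) with the statistical decision theory of the optimal decision rule (Proposition \ref{prop:optimal_decision_rule}), and then push the result through the strictly increasing outer function $\eta$.

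First, I would apply Proposition \ref{prop:variational_characterization_F} pointwise in $y$. For every $y\in\mathcal{Y}$, since $F$ is concave (by CCV), the proper loss $\ell_F$ defined in \eqref{eq:l_F} yields
\begin{align*}
F(p_{X\mid Y}(\cdot\mid y)) &= \min_{q(\cdot\mid y)\in \Delta_{\mathcal{X}}} \vE_{X}\!\left[\ell_F(X, q(\cdot\mid y))\relmiddle{|} Y=y\right],
\end{align*}
with the minimizer being $q(\cdot\mid y)=p_{X\mid Y}(\cdot\mid y)$. Multiplying by $p_Y(y)$, summing over $y$, and invoking Proposition \ref{prop:optimal_decision_rule} (interchange of expectation over $Y$ and pointwise minimization, as in Remark \ref{rem:q_X_Y_risk}), I obtain
\begin{align*}
\vE_Y\!\left[F(p_{X\mid Y}(\cdot\mid Y))\right] &= \min_{q_{X\mid Y}} \vE_{X,Y}\!\left[\ell_F(X, q_{X\mid Y}(\cdot\mid Y))\right],
\end{align*}
with the minimum attained at $q_{X\mid Y}^{*}(\cdot\mid y)=p_{X\mid Y}(\cdot\mid y)$ for all $y$.

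Next, I would exploit that $\eta$ is \emph{strictly increasing} and therefore commutes with $\min$ over the same feasible set. Together with EAVG this gives
\begin{align*}
H(X\mid Y) &= \eta\!\left(\vE_Y\!\left[F(p_{X\mid Y}(\cdot\mid Y))\right]\right) \\
&= \min_{q_{X\mid Y}} \eta\!\left(\vE_{X,Y}\!\left[\ell_F(X, q_{X\mid Y}(\cdot\mid Y))\right]\right).
\end{align*}
Subtracting from $H(X)=\eta(F(p_X))$ and turning the minus-min into a max yields
\begin{align*}
I_H(X;Y) &= \max_{q_{X\mid Y}} \Bigl[\eta(F(p_X)) - \eta\!\left(\vE_{X,Y}\!\left[\ell_F(X, q_{X\mid Y}(\cdot\mid Y))\right]\right)\Bigr].
\end{align*}
I would therefore \emph{define}
\begin{align*}
\mathcal{F}_H(p_X, q_{X\mid Y}) &:= \eta(F(p_X)) - \eta\!\left(\vE_{X,Y}\!\left[\ell_F(X, q_{X\mid Y}(\cdot\mid Y))\right]\right),
\end{align*}
which completes the proof and exhibits $q_{X\mid Y}^{*}=p_{X\mid Y}$ as the maximizer.

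The proof is essentially a two-level chaining argument, and the only subtle point is the justification that one may push the $\min$ past both the expectation over $Y$ and the strictly increasing map $\eta$. The former is standard (the Bayes-optimal decision rule is obtained by pointwise posterior optimization), and the latter is immediate from monotonicity; once these two commutations are in place, the variational formula follows by algebraic rearrangement. No concavity of $\eta$ or differentiability of $F$ is needed beyond what is used in Proposition \ref{prop:variational_characterization_F} to construct $\ell_F$ from a subgradient $z\in\partial F(q)$.
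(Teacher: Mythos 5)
Your proposal is correct and follows essentially the same route as the paper's proof: apply Proposition \ref{prop:variational_characterization_F} to each posterior $p_{X\mid Y}(\cdot\mid y)$, interchange the expectation over $Y$ with the pointwise minimization via Proposition \ref{prop:optimal_decision_rule} and Remark \ref{rem:q_X_Y_risk}, push the strictly increasing $\eta$ through the minimum, and define $\mathcal{F}_H$ as the resulting difference. The only cosmetic difference is that you make the pointwise-in-$y$ application of Proposition \ref{prop:variational_characterization_F} explicit, which the paper leaves implicit inside the expectation.
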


\begin{proof}
From Proposition \ref{prop:variational_characterization_F}, there exists a proper loss function $\ell_{F}(x, q)$ such that $F(p_{X}) = \min_{q}\vE_{X}^{p_{X}}\left[\ell_{F}(X, q)\right]$. 
Since $H(X | Y)$ satisfies EAVG, it can be written as 
\begin{align}
H(X\mid Y) &= \eta\left( \vE_{Y}\left[F(p_{X\mid Y}(\cdot\mid Y))\right] \right) \\ 
&= \eta\left( \vE_{Y}\left[\min_{q}\vE_{X}^{p_{X\mid Y}(\cdot \mid Y)}\left[\ell_{F}(X, q)\right]\right] \right) \\ 
&= \eta\left( \vE_{Y}\left[\min_{q}\vE_{X}\left[\ell_{F}(X, q) \mid Y\right]\right] \right) \\ 
&\overset{(a)}{=} \eta\left( \min_{q_{X\mid Y}} \vE_{X, Y}\left[\ell_{F}(X, q_{X\mid Y}(X \mid Y))\right]\right) \\ 
&\overset{(b)}{=} \min_{q_{X\mid Y}}\eta \left( \vE_{X, Y}\left[\ell_{F}(X, q_{X\mid Y}(X \mid Y))\right] \right), 
\end{align}
where 
\begin{itemize}
\item $(a)$ follows from Proposition \ref{prop:optimal_decision_rule} and Remark  \ref{rem:q_X_Y_risk}, 
\item $(b)$ follows from the assumption that $\eta$ is strictly increasing.
\end{itemize}
Therefore, we obtain the following variational characterization of $H$-MI:
\begin{align}
&I_{H}(X; Y) := \eta(F(p_{X})) - \eta\left( \vE_{Y}\left[F(p_{X\mid Y}(X\mid Y))\right] \right) \\ 
&= \eta(F(p_{X})) - \min_{q_{X\mid Y}}\eta \left( \vE_{X, Y}\left[\ell_{F}(X, q_{X\mid Y}(X \mid Y))\right] \right) \\ 
&= \max_{q_{X\mid Y}} \underbrace{\left( \eta(F(p_{X})) -\eta \left( \vE_{X, Y}\left[\ell_{F}(X, q_{X\mid Y}(X \mid Y))\right] \right) \right)}_{=:{\mathcal{F}_{H}}(p_{X}, q_{X\mid Y})}.
\end{align}
\end{proof}
\begin{eg} 
From Theorem \ref{thm:variational_characterization} and Example \ref{eg:examples_l_F} we obtain the variational characterization for specific $H$-MIs as follows: 
\begin{align}
&I(X; Y) 
= \max_{q_{X\mid Y}}\vE_{X, Y}^{p_{X}p_{Y\mid X}}\left[\log \frac{q_{X\mid Y}(X\mid Y)}{p_{X}(X)}\right],  \label{eq:variational_characterization_Shannon_MI} \\ 
&I_{\alpha}^{\text{A}}(X; Y) = \max_{q_{X\mid Y}} \frac{\alpha}{\alpha-1} \log \frac{\vE_{X, Y}^{p_{X}p_{Y\mid X}}\left[\left( \frac{q_{X\mid Y}(X\mid Y)}{\norm{q_{X\mid Y}(\cdot \mid Y)}_{\alpha}} \right)^{\alpha-1}\right]}{\norm{p_{X}}_{\alpha}}, \label{eq:variational_characterization_Arimoto_MI2} \\ 
&I_{\alpha}^{\text{H}}(X; Y) = \max_{q_{X\mid Y}}\frac{1}{\alpha-1} \times\notag \\
& \log \frac{\vE_{X, Y}^{p_{X}p_{Y\mid X}}\left[\alpha q_{X\mid Y}(X\mid Y)^{\alpha-1} - (\alpha-1)\norm{q_{X\mid Y}(\cdot\mid Y)}_{\alpha}^{\alpha}\right]}{\norm{p_{X}}_{\alpha}^{\alpha}}, \label{eq:variational_characterization_Hayashi_MI} \\ 
&I_{\alpha}^{\text{FB}}(X; Y) = \max_{q_{X\mid Y}} \log \frac{\vE_{X, Y}^{p_{X}p_{Y\mid X}}\left[\ell^{\text{FB}}(X, q_{X\mid Y}(X\mid Y))\right]}{\norm{p_{X}}_{\alpha}^{\frac{\alpha}{\alpha-1}}}, \label{eq:variational_characterization_FB_MI}
\end{align}
where $\ell^{\text{FB}}(x, q):=\norm{q}_{\alpha}^{\alpha-1} - \frac{\alpha}{\alpha-1}(\norm{q}_{\alpha}^{\alpha} - q(x)^{\alpha-1})$. 
\end{eg}

\begin{remark}
From Example \ref{eg:Arimoto_MI_expression}, we obtain another variational characterization with $\ell_{F}(x, q)=-g_{\alpha}(x, q)$ that is \textit{not} proper as follows:
\begin{align}
I_{\alpha}^{\text{A}}(X; Y) &= \max_{q_{X\mid Y}} \frac{\alpha}{\alpha-1} \log \frac{\vE_{X, Y}^{p_{X}p_{Y\mid X}}\left[q_{X\mid Y}(X\mid Y)^{\frac{\alpha-1}{\alpha}}\right]}{\norm{p_{X}}_{\alpha}}. \label{eq:variational_characterization_Arimoto_MI1}
\end{align}
\end{remark}


\section{Application: Deriving Algorithm For Computing $H$-Capacity}\label{sec:application_ABA}
In information theory, the notion of capacity often characterizes the theoretical limits of performance in the problem. 
For example, channel capacity $C:=\max_{p_{X}}I(X; Y)$ characterizes supremum of achievable rate in channel coding\cite{shannon}. 
Recently, Liao \textit{et al.} reported the operational meaning of Arimoto capacity $C_{\alpha}^{\text{A}}:=\max_{p_{X}}I_{\alpha}^{\text{A}}(X; Y)$ 
in the privacy-guaranteed data-publishing problems \cite[Thm 2]{8804205}. 
The Arimoto--Blahut algorithm (ABA), which is a well-known alternating optimization algorithm for computing capacity $C$, proposed by Arimoto \cite{1054753} and Blahut \cite{1054855}.
Extending his results, Arimoto derived an ABA for computing Arimoto capacity $C_{\alpha}^{\text{A}}$  in \cite{BN01990060en}. 
Recently, we derived another ABA for computing $C_{\alpha}^{\text{A}}$ using a variational characterization of $I_{\alpha}^{\text{A}}(X; Y)$ different from Arimoto's method \cite{kamatsuka2024new}. 
These algorithms are based on a double maximization problem using the variational characterization of MIs. 
In this section, we derive an alternating optimization algorithm for computing $H$-capacity $C_{H}:=\max_{p_{X}}I_{H}(X; Y)$ based on the variational characterization of $H$-MI and ABA.
Moreover, we show that the algorithms for computing Arimoto capacity $C_{\alpha}^{\text{A}}$ from our approach coincide with the previous algorithms \cite{BN01990060en}, \cite{kamatsuka2024new}. 

From Theorem \ref{thm:variational_characterization}, $H$-capacity $C_{H}:=\max_{p_{X}}I_{H}(X; Y)$ can be represented as a double maximization problem as follows:
\begin{align}
C_{H} &= \max_{p_{X}} \max_{q_{X\mid Y}} {\mathcal{F}_{H}}(p_{X}, q_{X\mid Y}), \label{eq:double_maximum}
\end{align}
where 
\begin{align}
&{\mathcal{F}_{H}}(p_{X}, q_{X\mid Y})\notag \\ 
&:=\left( \eta(F(p_{X})) -\eta \left( \vE_{X, Y}\left[\ell_{F}(X, q_{X\mid Y}(X \mid Y))\right] \right) \right).
\end{align}

Based on the representation in \eqref{eq:double_maximum}, we can derive an alternating optimization algorithm for computing $C_{H}$ as 
described in Algorithm \ref{alg:ABA_H}, where $p_{X}^{(0)}$ is an initial distribution of the algorithm.. 

\begin{algorithm}[h]
	\caption{Arimoto--Blahut algorithm for computing $C_{H}$}
	\label{alg:ABA_H}
	\begin{algorithmic}[1]
		\Require 
			\Statex $p_{X}^{(0)}, p_{Y\mid X}$, $\epsilon\in (0, 1)$
		\Ensure
			\Statex { approximation of} $C_{H}$
		\Initialize{
			$q_{X\mid Y}^{(0)} \gets \argmax_{q_{X\mid Y}}{\mathcal{F}_{H}}(p_{X}^{(0)}, q_{X\mid Y})$\\ 
			$F^{(0, 0)}\gets {\mathcal{F}_{H}}(p_{X}^{(0)}, q_{X\mid Y}^{(0)})$ \\ 
			$k\gets 0$ \\
      }
		\Repeat
			\State $k\gets k+1$
			\State $p_{X}^{(k)} \gets \argmax_{p_{X}}{\mathcal{F}_{H}}(p_{X}, q_{X\mid Y}^{(k-1)})$

			\State $q_{X\mid Y}^{(k)} \gets \argmax_{q_{X\mid Y}}{\mathcal{F}_{H}}(p_{X}^{(k)}, q_{X\mid Y})$
			\State $F^{(k, k)} \gets {\mathcal{F}_{H}}(p_{X}^{(k)}, q_{X\mid Y}^{(k)})$
		\Until{$\abs{F^{(k, k)} - F^{(k-1, k-1)}} < \epsilon$} 
		\State \textbf{return} $F^{(k, k)}$
	\end{algorithmic}
\end{algorithm}

\begin{table*}[ht]
  \caption{Formulae for updating $p_{X}^{(k)}$ and $q_{X\mid Y}^{(k)}$ in the Arimoto--Blahut Algorithm for calculating $H$-capacity $C_{H}$ (cited from \cite[Table I]{kamatsuka2024new})}
  \label{tab:update_formula_H}
  \resizebox{1.\textwidth}{!}{
  \centering
  \begin{tabular}{@{} cccc @{}}
    \toprule
    Name & ${\mathcal{F}_{H}}(p_{X}, q_{X\mid Y})$ & $p_{X}^{(k)}$ & $q_{X\mid Y}^{(k)}$  \\ 
    \midrule
    \begin{tabular}{c}
    ABA for \\ 
    computing $C$ \cite{1054753}, \cite{1054855}
    \end{tabular}
    & \begin{tabular}{c}
    $\vE_{X, Y}^{p_{X}p_{Y\mid X}} \left[\log \frac{q_{X\mid Y}(X\mid Y)}{p_{X}(X)}\right]$ 
    \end{tabular}
    & $\frac{\prod_{y} q_{X\mid Y}^{(k-1)}(x | y)^{p_{Y\mid X}(y | x)}}{\sum_{x}\prod_{y} q_{X\mid Y}^{(k-1)}(x | y)^{p_{Y\mid X}(y | x)}}$ 
    & $\frac{p_{X}^{(k)}(x)p_{Y\mid X}(y | x)}{\sum_{x}p_{X}^{(k)}(x) p_{Y\mid X}(y | x)}$  \\ 
    \midrule
    \begin{tabular}{c}
    ABA for \\
    computing $C_{\alpha}^{\text{A}}$ \cite{BN01990060en}
    \end{tabular}
    & \begin{tabular}{c}
    $\frac{\alpha}{\alpha-1} \log \sum_{x, y} p_{X_{\alpha}}(x)^{\frac{1}{\alpha}}p_{Y\mid X}(y | x) q_{X\mid Y}(x | y)^{\frac{\alpha-1}{\alpha}}$ 
    \end{tabular}
    & $\frac{\left( \sum_{y}p_{Y\mid X}(y | x)q_{X\mid Y}^{(k-1)}(x | y)^{\frac{\alpha-1}{\alpha}} \right)^{\frac{1}{\alpha-1}}}{\sum_{x}\left( \sum_{y}p_{Y\mid X}(y | x)q_{X\mid Y}^{(k-1)}(x | y)^{\frac{\alpha-1}{\alpha}} \right)^{\frac{1}{\alpha-1}}}$ 
    & $\frac{p_{X}^{(k)}(x)^{\alpha}p_{Y\mid X}(y | x)^{\alpha}}{\sum_{x}p_{X}^{(k)}(x)^{\alpha}p_{Y\mid X}(y | x)^{\alpha}}$  \\ 
    \begin{tabular}{c}
    ABA for \\
    computing $C_{\alpha}^{\text{A}}$ \cite{kamatsuka2024new}
    \end{tabular}
    & \begin{tabular}{c}
    $\frac{\alpha}{\alpha-1} \log \sum_{x, y} p_{X_{\alpha}}(x)^{\frac{1}{\alpha}}p_{Y\mid X}(y | x) q_{X_{\alpha}\mid Y}(x | y)^{\frac{\alpha-1}{\alpha}}$ 
    \end{tabular}
    & $\frac{\left( \sum_{y}p_{Y\mid X}(y | x) q_{X_{\alpha}\mid Y}^{(k-1)}(x | y)^{\frac{\alpha-1}{\alpha}} \right)^{\frac{1}{\alpha-1}}}{\sum_{x}\left( \sum_{y}p_{Y\mid X}(y | x) q_{X_{\alpha}\mid Y}^{(k-1)}(x | y)^{\frac{\alpha-1}{\alpha}} \right)^{\frac{1}{\alpha-1}}}$ 
    & $\frac{p_{X}^{(k)}(x)p_{Y\mid X}(y | x)}{\sum_{x}p_{X}^{(k)}(x)p_{Y\mid X}(y | x)}$  \\ 
    \bottomrule
  \end{tabular}
  }
\end{table*}

From Propositions \ref{prop:optimal_decision_rule} and \ref{prop:variational_characterization_F}, the optimum $q_{X\mid Y}^{*}=\argmax_{q_{X\mid Y}}{\mathcal{F}_{H}}(p_{X}, q_{X\mid Y})$
 for a fixed $p_{X}$ is obtained as follows.
\begin{prop}
For a fixed $p_{X}$, ${\mathcal{F}_{H}}(p_{X}, q_{X\mid Y})$ is maximized by 
\begin{align}
q_{X\mid Y}^{*}(x\mid y) &= p_{X\mid Y}(x\mid y) = \frac{p_{X}(x)p_{Y\mid X}(y\mid x)}{\sum_{x}p_{X}(x)p_{Y\mid X}(y\mid x)}.
\end{align}
\end{prop}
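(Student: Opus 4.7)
The plan is to exploit two facts already established in the excerpt: the outer function $\eta$ is strictly increasing, and the loss $\ell_{F}$ produced by Proposition \ref{prop:variational_characterization_F} is proper with $F(p_{X}) = \min_{q}\vE_{X}^{p_{X}}[\ell_{F}(X,q)]$, achieved at $q = p_{X}$. Together these reduce the maximization over $q_{X\mid Y}$ to a pointwise (in $y$) minimization of a Bayes posterior loss, to which properness applies directly.

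Concretely, I would proceed in three short steps. First, since $\eta(F(p_{X}))$ does not depend on $q_{X\mid Y}$ and $\eta$ is strictly increasing,
\begin{align}
\argmax_{q_{X\mid Y}} {\mathcal{F}_{H}}(p_{X}, q_{X\mid Y}) = \argmin_{q_{X\mid Y}} \vE_{X, Y}\!\left[\ell_{F}(X, q_{X\mid Y}(X\mid Y))\right].
\end{align}
Second, by Remark \ref{rem:q_X_Y_risk}, this Bayes risk for $q_{X\mid Y}$ decomposes as
\begin{align}
\vE_{X, Y}\!\left[\ell_{F}(X, q_{X\mid Y}(X\mid Y))\right] = \sum_{y} p_{Y}(y)\, \vE_{X}\!\left[\ell_{F}(X, q_{X\mid Y}(\cdot\mid y)) \relmiddle{|} Y=y\right],
\end{align}
so the minimum is attained by minimizing, for each $y$ with $p_{Y}(y)>0$, the inner conditional expectation with respect to $q_{X\mid Y}(\cdot\mid y)\in \Delta_{\mathcal{X}}$.

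Third, applying the variational characterization of $F$ (Proposition \ref{prop:variational_characterization_F}) to the posterior distribution $p_{X\mid Y}(\cdot\mid y)$ yields
\begin{align}
\min_{q} \vE_{X}\!\left[\ell_{F}(X, q)\relmiddle{|} Y=y\right] = F(p_{X\mid Y}(\cdot\mid y)),
\end{align}
with the minimum achieved at $q = p_{X\mid Y}(\cdot\mid y)$. Hence $q_{X\mid Y}^{*}(x\mid y) = p_{X\mid Y}(x\mid y)$, and substituting Bayes' rule $p_{X\mid Y}(x\mid y) = p_{X}(x)p_{Y\mid X}(y\mid x)/\sum_{x'}p_{X}(x')p_{Y\mid X}(y\mid x')$ gives the stated formula.

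There is essentially no technical obstacle: the main subtlety worth flagging is uniqueness. Properness of $\ell_{F}$ only guarantees that $p_{X\mid Y}(\cdot\mid y)$ is \emph{a} minimizer of the pointwise loss; uniqueness would require $\ell_{F}$ to be strictly proper (equivalently, $F$ to be strictly concave). Since the proposition only claims that this choice achieves the maximum, no uniqueness statement is needed, and the argument above is complete for all $y$ in the support of $p_{Y}$; on the null set $\{y : p_{Y}(y)=0\}$, the value of $q_{X\mid Y}(\cdot\mid y)$ is irrelevant to ${\mathcal{F}_{H}}$, and we may set it to $p_{X\mid Y}(\cdot\mid y)$ by convention.
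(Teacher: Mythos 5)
Your proof is correct and follows essentially the same route as the paper's: reduce the maximization to a Bayes-risk minimization via the strict monotonicity of $\eta$, decompose pointwise over $y$ using Proposition \ref{prop:optimal_decision_rule}, and invoke the properness of $\ell_{F}$ from Proposition \ref{prop:variational_characterization_F}. Your additional remarks on uniqueness and on $y$ outside the support of $p_{Y}$ are sensible clarifications the paper omits, but they do not change the argument.
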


\begin{proof}
It can be easily checked that finding the optimum $q_{X\mid Y}^{*}=\argmax_{q_{X\mid Y}}{\mathcal{F}_{H}}(p_{X}, q_{X\mid Y})$ for fixed $p_{X}$ is equivalent to finding the optimum $q_{X\mid Y}^{*}=\argmin_{q_{X\mid Y}}\vE_{X, Y}\left[\ell_{F}(X, q_{X\mid Y}(X\mid Y))\right]$. 
From Proposition \ref{prop:optimal_decision_rule}, the problem of finding $q_{X\mid Y}=\{q_{X\mid Y}(\cdot | y)\}_{y\in \mathcal{Y}}$ that minimizes $\vE_{X, Y}\left[\ell_{F}(X, q_{X\mid Y}(X\mid Y))\right]$ 
becomes equivalent to the problem of finding the optimal conditional distribution $q_{X\mid Y}(\cdot | y)$ for each $y\in \mathcal{Y}$ that 
minimizes $\vE_{X}\left[\ell(X, q_{X\mid Y}(\cdot | y))\mid Y=y\right] = \vE_{X}^{p_{X\mid Y}(\cdot \mid y)}\left[\ell(X, q_{X\mid Y}(\cdot | y))\right]$.
Since $\ell_{F}(x, q)$ defined in \eqref{eq:l_F} is proper, the optimum is obtained as $q_{X\mid Y}^{*}(\cdot | y) = p_{X\mid Y}(\cdot | y), y\in \mathcal{Y}$. 
\end{proof}

\begin{remark} \label{rem:depends}
On the other hand, whether the optimum $p_{X}^{*}=\argmax_{p_{X}}{\mathcal{F}_{H}}(p_{X}, q_{X\mid Y})$
for a fixed $q_{X\mid Y}$ can be obtained explicitly depends on $H=(\eta, F)$. 
For example, Arimoto \cite{1054753} and Blahut \cite{1054855} derived the explicit formula for $p_{X}^{*}$, where ${\mathcal{F}}(p_{X}, q_{X\mid Y}):=\vE_{X, Y}^{p_{X}p_{Y\mid X}}\left[\log \frac{q_{X\mid Y}(X\mid Y)}{p_{X}(X)}\right]$ is defined in \eqref{eq:variational_characterization_Shannon_MI}. 
Table \ref{tab:update_formula_H} lists the explicit updating formulae for computing channel capacity $C$.
However, when computing Hayashi capacity $C_{\alpha}^{\text{H}}:=\max_{p_{X}}I_{\alpha}^{\text{H}}(X; Y)$ and 
Fehr--Berens capacity $C_{\alpha}^{\text{FB}}:=\max_{p_{X}}I_{\alpha}^{\text{FB}}(X; Y)$, it seems that there is no explicit updating formula for $p_{X}^{*}$ for a fixed $q_{X\mid Y}$. 
Therefore, one must find it numerically.
\end{remark}

Next, we consider driving the algorithms for computing the Arimoto capacity $C_{\alpha}^{\text{A}}$. 
Based on the variational characterizations \eqref{eq:variational_characterization_Arimoto_MI1} and \eqref{eq:variational_characterization_Arimoto_MI2}, 
we define functionals ${\mathcal{F}_{\alpha}}^{\text{A1}}(p_{X}, q_{X\mid Y})$ and ${\mathcal{F}_{\alpha}}^{\text{A2}}(p_{X}, q_{X\mid Y})$ as follows:
\begin{align}
{\mathcal{F}_{\alpha}}^{\text{A1}}(p_{X}, q_{X\mid Y}) &:= \frac{\alpha}{\alpha-1} \log \frac{\vE_{X, Y}^{p_{X}p_{Y\mid X}}\left[q_{X\mid Y}(X\mid Y)^{\frac{\alpha-1}{\alpha}}\right]}{\norm{p_{X}}_{\alpha}}, \\ 
{\mathcal{F}_{\alpha}}^{\text{A2}}(p_{X}, q_{X\mid Y}) &:= \frac{\alpha}{\alpha-1} \log \frac{\vE_{X, Y}^{p_{X}p_{Y\mid X}}\left[\left( \frac{q_{X\mid Y}(X\mid Y)}{\norm{q_{X\mid Y}(\cdot \mid Y)}_{\alpha}} \right)^{\alpha-1}\right]}{\norm{p_{X}}_{\alpha}}.
\end{align}

Simple calculations yield the following result.

\begin{prop}
\begin{align}
&{\mathcal{F}_{\alpha}}^{\text{A1}}(p_{X}, q_{X\mid Y}) \notag \\ 
&= \frac{\alpha}{\alpha-1} \log \sum_{x, y} p_{X_{\alpha}}(x)^{\frac{1}{\alpha}} p_{Y\mid X}(y | x) q_{X\mid Y}(x | y)^{\frac{\alpha-1}{\alpha}}, \\
&{\mathcal{F}_{\alpha}}^{\text{A2}}(p_{X}, q_{X\mid Y}) \notag \\
&= \frac{\alpha}{\alpha-1} \log \sum_{x, y} p_{X_{\alpha}}(x)^{\frac{1}{\alpha}} p_{Y\mid X}(y | x) q_{X_{\alpha}\mid Y}(x | y)^{\frac{\alpha-1}{\alpha}}, 
\end{align}
where $p_{X_{\alpha}}$ is the $\alpha$-tilted distribution of $p_{X}$ defined in \eqref{eq:alpha_tilted_dist} and $q_{X_{\alpha}\mid Y}=\{q_{X_{\alpha}\mid Y}(\cdot | y)\}_{y\in \mathcal{Y}}$ 
is a set of $\alpha$-tilted distribution of $q_{X\mid Y}(\cdot | y)$ defined as $q_{X_{\alpha}\mid Y}(x | y):=\frac{q_{X\mid Y}(x|y)^{\alpha}}{\sum_{x}q_{X\mid Y}(x|y)^{\alpha}}$.  
\end{prop}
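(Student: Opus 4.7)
The plan is to establish both identities by direct algebraic manipulation, hinging on a single observation about the $\alpha$-tilted distribution. From the definition in \eqref{eq:alpha_tilted_dist}, one has
\begin{align}
p_{X_{\alpha}}(x)^{\frac{1}{\alpha}} &= \left( \frac{p_{X}(x)^{\alpha}}{\sum_{x'} p_{X}(x')^{\alpha}} \right)^{\frac{1}{\alpha}} = \frac{p_{X}(x)}{\norm{p_{X}}_{\alpha}},
\end{align}
so $p_{X_{\alpha}}(x)^{1/\alpha}$ supplies precisely the normalization factor $1/\norm{p_{X}}_{\alpha}$ that appears in the denominators of ${\mathcal{F}_{\alpha}}^{\text{A1}}$ and ${\mathcal{F}_{\alpha}}^{\text{A2}}$. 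This identity is the whole engine of the proof.

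For the first identity, I would expand the expectation in the numerator of ${\mathcal{F}_{\alpha}}^{\text{A1}}(p_{X}, q_{X\mid Y})$ as the double sum $\sum_{x, y} p_{X}(x)\, p_{Y\mid X}(y\mid x)\, q_{X\mid Y}(x\mid y)^{\frac{\alpha-1}{\alpha}}$, absorb the factor $\norm{p_{X}}_{\alpha}^{-1}$ into the summand, and then apply the above observation to rewrite $p_{X}(x)/\norm{p_{X}}_{\alpha}$ as $p_{X_{\alpha}}(x)^{1/\alpha}$. Taking $\frac{\alpha}{\alpha-1}\log$ of both sides yields the claimed form.

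For the second identity, the same rewriting disposes of the outer factor $\norm{p_{X}}_{\alpha}^{-1}$. The remaining step is to show that $(q_{X\mid Y}(x\mid y)/\norm{q_{X\mid Y}(\cdot\mid y)}_{\alpha})^{\alpha-1}$ coincides with $q_{X_{\alpha}\mid Y}(x\mid y)^{\frac{\alpha-1}{\alpha}}$. This follows by raising $q_{X_{\alpha}\mid Y}(x\mid y) = q_{X\mid Y}(x\mid y)^{\alpha}/\sum_{x'} q_{X\mid Y}(x'\mid y)^{\alpha}$ to the power $(\alpha-1)/\alpha$: the denominator becomes $\norm{q_{X\mid Y}(\cdot\mid y)}_{\alpha}^{\alpha-1}$, which is exactly the pseudo-spherical normalization appearing in ${\mathcal{F}_{\alpha}}^{\text{A2}}$. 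Substituting and taking $\frac{\alpha}{\alpha-1}\log$ then finishes the argument.

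No substantive obstacle is anticipated; the proposition amounts to exponent bookkeeping together with a single application of the tilted-distribution definition to both the marginal $p_{X}$ and to each conditional $q_{X\mid Y}(\cdot\mid y)$. The only point worth a brief remark is that the manipulations hold uniformly for $\alpha \in (0, 1) \cup (1, \infty)$, since the probabilities involved are assumed positive on their supports so that the fractional powers $(\cdot)^{(\alpha-1)/\alpha}$ are well defined, and the argument of the logarithm remains strictly positive throughout.
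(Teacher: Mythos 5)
Your computation is correct and is exactly the ``simple calculation'' the paper alludes to (the paper gives no explicit proof): the identity $p_{X_{\alpha}}(x)^{1/\alpha}=p_{X}(x)/\norm{p_{X}}_{\alpha}$ absorbs the denominator, and $q_{X_{\alpha}\mid Y}(x\mid y)^{(\alpha-1)/\alpha}=\bigl(q_{X\mid Y}(x\mid y)/\norm{q_{X\mid Y}(\cdot\mid y)}_{\alpha}\bigr)^{\alpha-1}$ handles the pseudo-spherical normalization. Nothing further is needed.
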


The variational characterization $I_{\alpha}^{\text{A}}(X; Y) =$\\$ \max_{q_{X\mid Y}} {\mathcal{F}_{\alpha}}^{\text{A1}}(p_{X}, q_{X\mid Y})$ is equivalent to 
that presented in \cite[Eq. (7.103)]{BN01990060en} by Arimoto (see also \cite[Prop 4 and Remark 4]{kamatsuka2024new}). 
On the other hand, the variational characterization $I_{\alpha}^{\text{A}}(X; Y) = \max_{q_{X\mid Y}} {\mathcal{F}_{\alpha}}^{\text{A2}}(p_{X}, q_{X\mid Y})$ is equivalent to that presented in \cite[Thm 1]{kamatsuka2024new}. 
Therefore, Algorithm \ref{alg:ABA_H} applied for computing the Arimoto capacity $C_{\alpha}^{\text{A}}$ is equivalent to those  previously presented in \cite{BN01990060en}, \cite{kamatsuka2024new}. 
Table \ref{tab:update_formula_H} lists the explicit updating formulae for computing Arimoto  capacity $C_{\alpha}^{\text{A}}$ of each algorithm.

{

Finally, we discuss the global convergence property of Algorithm \ref{alg:ABA_H}. 
In general, there is no guarantee that Algorithm \ref{alg:ABA_H} exhibits global convergence property, and whether it does or not depends on the given $H=(\eta, F)$.
However, the following sufficient condition on $H=(\eta, F)$ for the global convergence can be immediately obtained from \cite[Thm 10.5]{10.5555/1199866}.
\begin{prop} 
Let $\{p_{X}^{(k)}\}_{k=0}^{\infty}$ and $\{q_{X\mid Y}^{(k)}\}_{k=0}^{\infty}$ be sequences of distributions obtained from Algorithm \ref{alg:ABA_H}. 
If $(p_{X}, q_{X\mid Y}) \mapsto \mathcal{F}_{H}(p_{X}, q_{X\mid Y})$ is jointly concave, then 
\begin{align}
\lim_{k\to \infty} \mathcal{F}_{H}(p_{X}^{(k)}, q_{X\mid Y}^{(k)}) = C_{H}.
\end{align}
\end{prop}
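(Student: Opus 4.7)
The plan is to recast Algorithm \ref{alg:ABA_H} as a standard two-block alternating maximization procedure for the functional $\mathcal{F}_{H}$ over the compact product domain $\Delta_{\mathcal{X}} \times \Delta_{\mathcal{X}}^{\abs{\mathcal{Y}}}$, and then to invoke the cited convergence theorem for jointly concave objectives. By construction of the updates, for each $k\geq 1$ we have
\begin{align}
\mathcal{F}_{H}(p_{X}^{(k)}, q_{X\mid Y}^{(k)})
\geq \mathcal{F}_{H}(p_{X}^{(k)}, q_{X\mid Y}^{(k-1)})
\geq \mathcal{F}_{H}(p_{X}^{(k-1)}, q_{X\mid Y}^{(k-1)}),
\end{align}
so the scalar sequence $F^{(k,k)}$ is monotonically non-decreasing. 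It is bounded above by $C_{H}$, since Theorem \ref{thm:variational_characterization} gives $\mathcal{F}_{H}(p_{X}, q_{X\mid Y}) \leq I_{H}(X;Y) \leq C_{H}$ for every feasible pair. The monotone convergence theorem then guarantees $F^{(k,k)} \to L$ for some $L \leq C_{H}$.

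Next, I would extract from the bounded sequence $\{(p_{X}^{(k)}, q_{X\mid Y}^{(k)})\}$ a subsequence converging to a limit point $(p_{X}^{\star}, q_{X\mid Y}^{\star})$ by compactness of the domain, at which $\mathcal{F}_{H}(p_{X}^{\star}, q_{X\mid Y}^{\star}) = L$ by continuity. Since consecutive updates are coordinate-wise maximizers, I would then argue that the limit point is a coordinate-wise stationary point, i.e., $p_{X}^{\star} \in \argmax_{p_{X}} \mathcal{F}_{H}(p_{X}, q_{X\mid Y}^{\star})$ and $q_{X\mid Y}^{\star} \in \argmax_{q_{X\mid Y}} \mathcal{F}_{H}(p_{X}^{\star}, q_{X\mid Y})$. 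The joint concavity hypothesis becomes essential here: for a jointly concave function on a convex domain, any coordinate-wise maximum is a global maximum, because the directional derivatives (or the supergradient inequality) in every feasible direction of the product domain decompose into non-positive blockwise contributions. This is exactly the content of Theorem 10.5 of \cite{10.5555/1199866}, whose conclusion gives $L = C_{H}$. The extension from the subsequence to the full sequence is immediate from monotonicity of $F^{(k,k)}$.

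The main obstacle I expect is the rigorous verification that limits of coordinate-wise optima are themselves coordinate-wise optima: this requires a small continuity argument for the $\argmax$ operators that can fail without additional regularity of $\mathcal{F}_{H}$, and in particular requires care when $F$ is merely concave rather than smoothly differentiable so that the supergradient condition at the limit must be obtained from those along the subsequence. The cited theorem packages precisely this concavity/continuity interplay, so in practice the proof reduces to checking the (mild) joint concavity and upper-semicontinuity hypotheses of that theorem for the specific form of $\mathcal{F}_{H}(p_{X}, q_{X\mid Y}) = \eta(F(p_{X})) - \eta(\vE_{X,Y}[\ell_{F}(X, q_{X\mid Y}(X \mid Y))])$ inherited from the strict monotonicity of $\eta$ and the concavity of $F$.
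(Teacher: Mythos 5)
The paper gives no proof of this proposition at all: it simply asserts that the statement is ``immediately obtained'' from \cite[Thm 10.5]{10.5555/1199866}. Your easy half --- the monotonicity chain $\mathcal{F}_{H}(p_{X}^{(k)}, q_{X\mid Y}^{(k)}) \geq \mathcal{F}_{H}(p_{X}^{(k)}, q_{X\mid Y}^{(k-1)}) \geq \mathcal{F}_{H}(p_{X}^{(k-1)}, q_{X\mid Y}^{(k-1)})$, the bound $\mathcal{F}_{H}\leq I_{H}(X;Y)\leq C_{H}$ from Theorem \ref{thm:variational_characterization}, and convergence of $F^{(k,k)}$ to some $L\leq C_{H}$ --- is correct, and your ultimate deferral to the cited theorem is exactly what the paper does.

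The gap is in the self-contained justification you offer for the hard half. The claim that ``for a jointly concave function on a convex domain, any coordinate-wise maximum is a global maximum'' is false without additional smoothness: take $f(x,y)=\min(x,y)$ on $[0,1]^{2}$, which is jointly concave; the point $(0,0)$ is a coordinate-wise maximum (since $f(x,0)=f(0,y)=0$ for all $x,y$), yet $f(0,0)=0<1=f(1,1)$. Your proposed mechanism --- that directional derivatives ``decompose into non-positive blockwise contributions'' --- fails precisely here: for a nondifferentiable concave $f$ the directional derivative is \emph{super}additive, $f'(u;d_{1}+d_{2})\geq f'(u;(d_{1},0))+f'(u;(0,d_{2}))$, so non-positivity of the blockwise terms yields only a useless lower bound on $f'(u;d)$. (In the example, $f'((0,0);(1,1))=1>0$ while both blockwise derivatives vanish.) Consequently the limit-point/stationarity route does not close under joint concavity alone; Yeung's Theorem 10.5 is proved under blanket regularity assumptions on $f$ (continuity of the partial derivatives and conditions on the coordinate maximizers) stated earlier in his Section 10.3.2, and its proof is a direct estimate along the segment joining the iterate to an optimizer rather than an $\argmax$-continuity argument. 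You correctly sense that this is ``the main obstacle,'' but the fix is not a small continuity argument --- it is the extra hypotheses of the cited theorem, which the proposition (and your proof) silently inherit. As a verification of the proposition as literally stated, your argument therefore stands or falls with the citation, just as the paper's does; as an independent proof it has a genuine hole at the coordinate-wise-implies-global step.
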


\begin{remark}
$\mathcal{F}(p_{X}, q_{X\mid Y}) :=\vE_{X, Y}^{p_{X}p_{Y\mid X}}\left[\log \frac{q_{X\mid Y}(X\mid Y)}{p_{X}(X)}\right]$ is a typical example that satisfies this condition (see \cite[Section 10.3.2]{10.5555/1199866}).
Note that even if $H=(\eta, F)$ does not satisfy this sufficient condition, it may be possible to show the global convergence property of Algorithm \ref{alg:ABA_H}. 
For example, Kamatsuka \textit{et al}. \cite[Cor 2]{kamatsuka2024new} proved that 
\begin{align}
\lim_{k\to \infty} \mathcal{F}_{\alpha}^{\text{A1}}(p_{X}^{(k)}, q_{X\mid Y}^{(k)}) &= \lim_{k\to \infty} \mathcal{F}_{\alpha}^{\text{A2}}(p_{X}^{(k)}, q_{X\mid Y}^{(k)}) = C_{\alpha}^{\text{A}}
\end{align}
by showing the equivalence of the proposed algorithm with the alternating optimization algorithm for which global convergence is guaranteed by Arimoto \cite[Thm 3]{1055640}.
\end{remark}

}

\section{Conclusion}\label{sec:conclusion}
In this study, we derived a variational characterization of $H$-MI $I_{H}(X; Y)$. 
On the basis of the characterization, we derived an alternating optimization algorithm for $H$-capacity $C_{H} := \max_{p_{X}} I_{H}(X; Y)$. 
We also showed that the algorithms applied for computing Arimoto capacity $C_{\alpha}^{\text{A}}$ coincide with the previously reported algorithms \cite{BN01990060en}, \cite{kamatsuka2024new}.
In a future study, we will derive algorithms for the calculating Hayashi capacity $C_{\alpha}^{\text{H}}:=\max_{p_{X}}I_{\alpha}^{\text{H}}(X; Y)$ and 
Fehr--Berens capacity $C_{\alpha}^{\text{FB}}:=\max_{p_{X}}I_{\alpha}^{\text{FB}}(X; Y)$. 

\section*{Acknowledgments}
This work was supported by JSPS KAKENHI Grant Number JP23K16886. 


\end{document}